\newif \ifproofs
\newif \ifnotes 
\newtheorem{observation}{Observation}
\newtheorem{assumption}{Assumption}
\theoremstyle{definition}
\newtheorem{definition}{Definition}
\newtheorem{remark}{Remark}
\newtheorem{example}{Example}
\newcommand{\abs}[1]{\left| #1 \right|}
\newcommand{\bigpar}[1]{\left( #1 \right)}
\newcommand{\bigbra}[1]{\left[ #1 \right]}
\newcommand{\bigbrace}[1]{\left\{ #1 \right\}}
\newcommand{\bigceil}[1]{\left\lceil #1 \right\rceil}
\newcommand{\bigfloor}[1]{\left\lfloor #1 \right\rfloor}
\newcommand{\casewise}[1]{\left\{ #1 \right.}
\newcommand{\evaluate}[1]{\left. #1 \right|}
\newcommand{\cD}{\mathcal{D}}
\newcommand{\cF}{\mathcal{F}}
\newcommand{\cM}{\mathcal{M}}
\newcommand{\cR}{\mathcal{R}}
\newcommand{\cT}{\mathcal{T}}
\newcommand{\cW}{\mathcal{W}}
\newcommand{\cX}{\mathcal{X}}
\newcommand\acknowledgement[1]{%
  \begingroup
  \renewcommand\thefootnote{}\footnote{#1}%
  \addtocounter{footnote}{-1}%
  \endgroup
}
\newif \ifarXiv
\newcommand{\smi}{\hyperref[app:netflow:tv]{SM-I}}
\newcommand{\smii}{\hyperref[app:netflow:ss]{SM-II}}
\newcommand{\smiii}{\hyperref[sec:crypto]{SM-III}}
\newcommand{\smiv}{\hyperref[app:examples]{SM-IV}}
\newcommand{\smv}{\hyperref[app:assump:honest:justification]{SM-V}}
\newcommand{\smvi}{\hyperref[app:RRA]{SM-VI}}
\newcommand{\smvii}{\hyperref[app:diffpriv_efficacy]{SM-VII}}
\newcommand{\smviii}{\hyperref[app:cong_price]{SM-VIII}}
\newcommand{\smix}{\hyperref[app:merkle_proof]{SM-IX}}
\newcommand{\smi}{SM-I}
\newcommand{\smii}{SM-II}
\newcommand{\smiii}{SM-III}
\newcommand{\smiv}{SM-IV}
\newcommand{\smv}{SM-V}
\newcommand{\smvi}{SM-VI}
\newcommand{\smvii}{SM-VII}
\newcommand{\smviii}{SM-VIII}
\newcommand{\smix}{SM-IX}
\title{Trust but Verify: Cryptographic Data Privacy \\for Mobility Management}
\author{

\begin{tabular}{cc}
    \begin{tabular}{c}
        Matthew Tsao \\
        Stanford University \\
        \texttt{mwtsao@stanford.edu} 
    \end{tabular}
     & 
    \begin{tabular}{c}
        Kaidi Yang \\
        Stanford University \\
        \texttt{kaidi.yang@stanford.edu} 
    \end{tabular}
    \\
    & \\
    \begin{tabular}{c}
        Stephen Zoepf \\
        Lacuna Technologies \\
        \texttt{stephen.zoepf@lacuna.ai} 
    \end{tabular}
     & 
    \begin{tabular}{c}
        Marco Pavone \\
        Stanford University \\
        \texttt{pavone@stanford.edu} 
    \end{tabular}    
\end{tabular}

}
\begin{document}
\maketitle

\begin{abstract}
    The era of Big Data has brought with it a richer understanding of user behavior through massive data sets, which can help organizations optimize the quality of their services. In the context of transportation research, mobility data can provide Municipal Authorities (MA) with insights on how to operate, regulate, or improve the transportation network. Mobility data, however, may contain sensitive information about end users and trade secrets of Mobility Providers (MP). Due to this data privacy concern, MPs may be reluctant to contribute their datasets to MA. Using ideas from cryptography, we propose an interactive protocol between a MA and a MP in which MA obtains insights from mobility data without MP having to reveal its trade secrets or sensitive data of its users. This is accomplished in two steps: a commitment step, and a computation step. In the first step, Merkle commitments and aggregated traffic measurements are used to generate a cryptographic commitment. In the second step, MP extracts insights from the data and sends them to MA. Using the commitment and zero-knowledge proofs, MA can certify that the information received from MP is accurate, without needing to directly inspect the mobility data. We also present a differentially private version of the protocol that is suitable for the large query regime. The protocol is verifiable for both MA and MP in the sense that dishonesty from one party can be detected by the other. The protocol can be readily extended to the more general setting with multiple MPs via secure multi-party computation.
\end{abstract}

\acknowledgement{This  research  was  supported  by  the  National  Science  Foundation under CAREER Award CMMI-1454737. K. Yang would like to acknowledge the support of the Swiss  National  Science  Foundation (SNSF) Postdoc Mobility Fellowship (P400P2\_199332).} 

\newpage
\tableofcontents
\newpage 


\section{Introduction}

The rise of mobility as a service, smart vehicles and smart cities is revolutionizing transportation industries all over the world. Mobility management, which entails operation, regulation, and innovation of transportation systems, can leverage mobility data to improve the efficiency, safety, accessibility, and adaptability of transportation systems far beyond what was previously achievable. The analysis and sharing of mobility data, however, introduces two key concerns. The first concern is data privacy; sharing mobility data can introduce privacy risks to end users that comprise the datasets. The second concern is credibility; in situations where data is not shared, how can the correctness of numerical studies be verified? These concerns motivate the need for data analysis tools for transportation systems which are both \textit{privacy preserving} and \textit{verifiable}. 

The \textit{data privacy} issue in transportation is a consequence of the trade-off between data availability and data privacy. While user data can be used to inform infrastructure improvement, equity and green initiatives, the data may contain sensitive user information and trade secrets of mobility providers. As a result, end users and mobility providers may be reluctant to share their data with city authorities. Cities have recently begun mandating micromobility providers to share detailed trajectory data of all trips, arguing that the data is needed to enforce equity or environmental objectives. Some mobility providers argued that while names and other directly identifiable information may not be included in the data, trajectory data can still reveal schedules, routines and habits of the city's inhabitants. The mobility providers' concern over the release of anonymized data is justified. \cite{DworkMNS06} showed that any attempt to release anonymized data either fails to provide anonymity, or there are low-sensitivity attributes of the original dataset that cannot be determined from the published version. In general, anonymization is increasingly easily defeated by the very techniques that are being developed for many legitimate applications of big data \cite{PCAST14}. Such disputes highlight the need for privacy-preserving data analysis tools in transportation.

A communication scheme between a sender and a receiver is \textit{verifiable} if it enables the receiver to determine whether the message or report it receives is an accurate representation of the truth. When the objectives of mobility providers and policy makers are not aligned, one party may benefit from misreporting data or other information, giving rise to verifiability issues in transportation. An example of this is Greyball software \cite{Isaac17}. Mobility providers developed Greyball software to deny service or display misleading information to targeted users. It was originally developed to protect their drivers from oppressive authorities in foreign countries, by misreporting driver location to accounts that were believed to belong to the oppressive authorities. However, mobility providers also used Greyball to hide their activity from authorities in the United States when their operations were scrutinized. Another example of verifiability issues is third party wage calculation apps \cite{Szymkowski21}. Drivers, frustrated by instances of being underpaid, created an app to confirm whether the pay was consistent with the length and duration of each trip. Such incidents highlight the need for verifiable data analysis tools in transportation.


\subsection{Statement of Contributions}

In this paper we propose a protocol between a Municipal Authority and a Mobility Provider that enables the Mobility Provider to send insights from its data to the Municipal Authority in a privacy-preserving and verifiable manner. In contrast to \textit{non-interactive} data sharing mechanisms (which are currently used by most municipalities) where a Municipal Authority is provided an aggregated and anonymized version of the data to analyze, our proposed protocol is an \textit{interactive} mechanism where a Municipal Authority sends queries and Mobility Providers give responses. By sharing responses to queries rather than the entire dataset, interactive mechanisms circumvent the data anonymization challenges faced by non-interactive approaches \cite{DworkMNS06, PCAST14}. 

Our proposed protocol, depicted in Figure~\ref{fig:protocol_overview}, has three main steps. In the first step, the Mobility Provider uses its data to produce a data identifier which it sends to the Municipal Authority. The Municipal Authority can then send its data query to the Mobility Provider in the second step. In the third step, the Mobility Provider sends its response along with a zero knowledge proof. The Municipal Authority can use the zero knowledge proof to check that the response is consistent with the identifier, i.e., the response was computed from the same data that was used to create the identifier. If the Municipal Authority has multiple queries, steps 2 and 3 are repeated. 

The protocol uses cryptographic commitments and aggregated traffic measurements to ensure that the identifier is properly computed from the true mobility data. In particular, any deviation from the protocol by one party can be detected by the other, making the protocol strategyproof for both parties. Given that the identifier is properly computed, the zero knowledge proof then enables the Municipal Authority to verify the correctness of the response without needing to directly inspect the mobility data. Since the Municipal Authority never needs to inspect the mobility data, the protocol is privacy-preserving. 

The protocol can be extended to the more general case of multiple Mobility Providers, each with a piece of the total mobility data. This is done by including a secure multi-party computation in step 3 of the protocol. Answering a large number of queries with our protocol can lead to privacy issues since it was shown in \cite{DinurN03} that a dataset can be reconstructed from many accurate statistical measurements. To address this concern, we generalize the protocol to enable differentially private responses from the Mobility Provider in large query regimes.

\begin{figure}
    \centering
    \includegraphics[width = 0.5\textwidth]{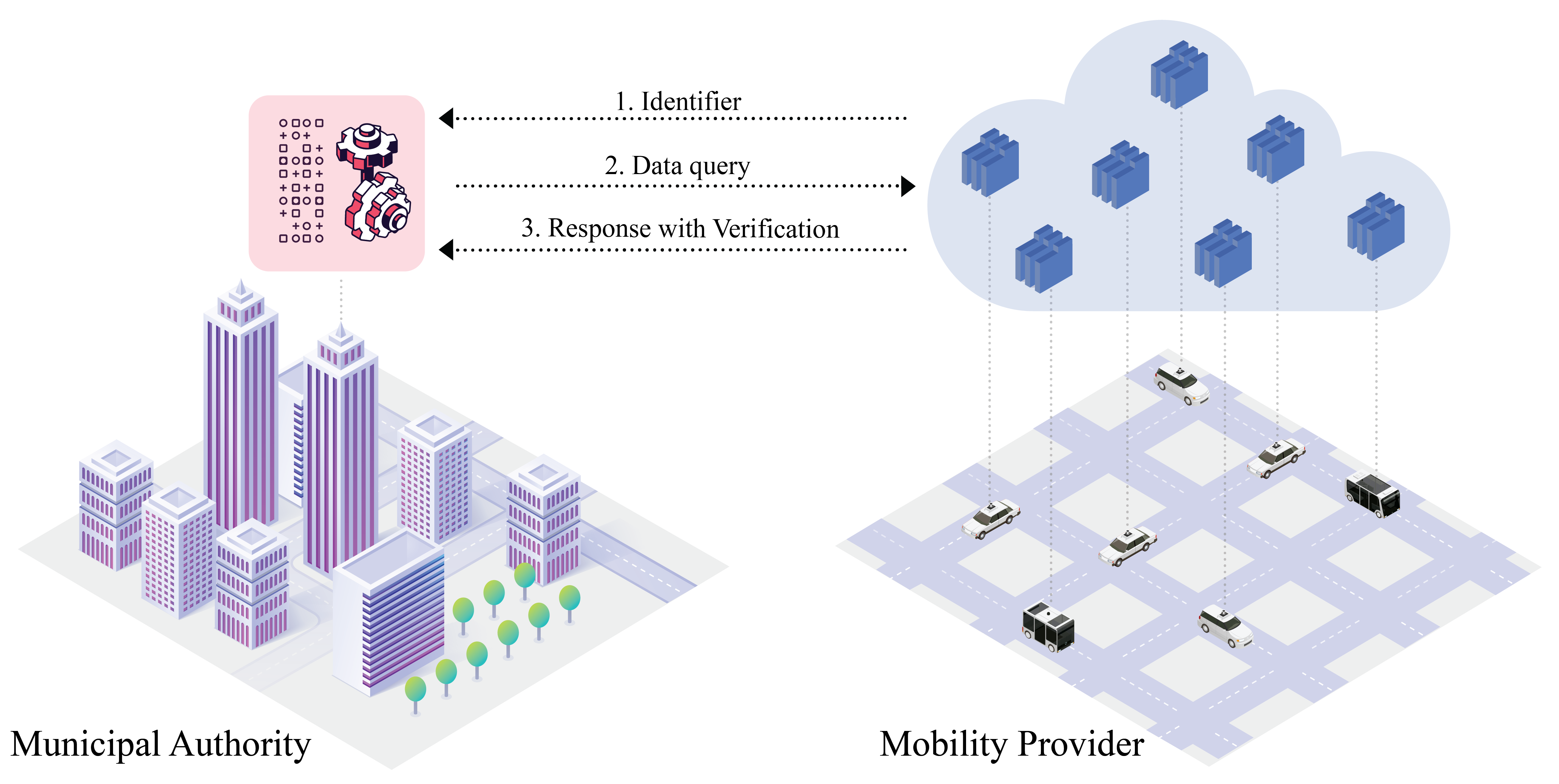}
    \caption{The Mobility Provider can answer the Municipal Authority's data-related mobility queries in a verifiable way \textit{without needing to share the data}. The absence of data sharing in the protocol reduces the chance that a malicious third party intercepts and uses the data for nefarious privacy-invasive purposes.}
    \label{fig:protocol_overview}
\end{figure}


\subsection{Organization}

This paper is organized as follows. The remainder of the introduction discusses academic work related to privacy and verifiability in transportation networks. In Section~\ref{sec:model} we introduce a mathematical model of transportation networks and use it to formulate the data privacy problem for Mobility Management. We provide a high level intuitive description of our proposed protocol in Section~\ref{sec:highlevel}. In Section~\ref{sec:protocol} we provide a full technical description of our protocol. We discuss some of the technical nuances of the protocol and their implications in Section~\ref{sec:discussion}. We summarize our work and identify important areas for future research in Section~\ref{sec:conclusion}. In Appendix~\ref{sec:ext:differentialprivacy} we present a differentially private extension of the protocol that is suitable for the large query regime. 


\subsection{Related Work}

Within the academic literature, this work is related to the following four fields: misbehavior detection in cooperative intelligent transportation networks, data privacy in transportation systems, differential privacy, and secure multi-party computation. We briefly discuss how this work complements ideas from these fields. 

Cooperative intelligent transportation networks (cITS) aim to provide benefits to the safety, efficiency, and adaptability of transportation networks by having individual vehicles share their information. As with all decentralized systems, security and robustness against malicious agents is essential for practical deployment. As such, misbehavior detection in cITS have been studied extensively \cite{HeijdenDLK19}. Misbehavior detection techniques often rely on honest agents acting as referees, and are able to detect misbehavior in the honest majority setting. Watchdog is one such protocol \cite{Marti00, Hortenlano10} which uses peer-to-peer refereeing. The protocol uses a public key infrastructure (PKI) to assign a persisting identity to each node in the network, and derives a reputation for each node based on its historical behavior. Our objective in this work is also detection of misbehavior, but in a different setting. In our setting, while the mobility network is comprised of many agents (customers and drivers), there is a single entity (the Mobility Provider, e.g., a ridehailing service) who is responsible for the storage and analysis of trip data. As such, the concept of honest majority does not apply to our setting. Furthermore, \cite{Hortenlano10} does not address the issue of data privacy; indeed, PKIs can often expose the users' identities, especially if an attacker cross-references the network traffic with other traffic records.

Privacy in intelligent transportation systems is often implemented by using non-interactive anonymization (e.g., data aggregation), cryptographic tools or differential privacy. Providing anonymity in non-interactive data analysis mechanisms is challenging \cite{DworkMNS06, PCAST14} and thus data aggregation alone is often not enough to provide privacy. From the cryptography side, to address the lack of anonymity provided by blockchains like Bitcoin and Ethereum, zero knowledge proofs \cite{GoldwasserMR89} were deployed in blockchains like Zcash \cite{SassonEtAl14} to provide fully confidential transactions. In the context of transportation, zero knowledge proofs have been proposed for privacy-preserving vehicle authentication to EV charging services \cite{GabayAkkayaEtAl2020}, and privacy-preserving driver authentication to customers in ridehailing applications \cite{LiMeeseEtAl2020}. These privacy-preserving authentication systems rely on a trusted third party to distribute and manage certificates. 

Differential privacy is an interactive mechanism for data privacy which uses randomized responses to hide user-specific information \cite{DworkMNS06}. For any query, the data collector provides a randomized response, where two datasets which differ in only one entry produce statistically indistinguishable outputs. Due to this randomization, there is a trade-off between the accuracy of the response and the level of privacy provided. Randomization is necessary to preserve privacy in the large query regime as demonstrated by \cite{DinurN03} which showed that a dataset can be reconstructed from many accurate statistical measurements. The standard model of differential privacy, however, relies on a \textit{trusted} data collector to apply the appropriate randomized response to queries. This is problematic in situations where the data collector is not trusted. A local model of differential privacy where users perturb their data before sending it to the data collector has received significant attention due to trust concerns \cite{KasiviswanathanLNRS11}. However mobility providers often record exact details about user trips, making local differential privacy unsuitable for current mobility applications (See Remark~\ref{rem:LDP}). Instead, we believe cryptographic techniques can be used to address trust concerns. There are also more general concerns about trust; downstream applications of data queries can lead to conflicts of interest and encourage strategic behavior.

Secure Multi-Party Computation (MPC) is a technique whereby several players, each possessing private data, can jointly compute a function on their collective data without any player having to reveal their data to other players \cite{GoldreichMW87}. MPC achieves confidentiality by applying Shamir's Secret Sharing \cite{Shamir79} to inputs and intermediate results. In its base form, MPC is secure against honest-but-curious adversaries, which follow the protocol, but may try to do additional calculations to learn the private data of other players. In general, security against active malicious adversaries, which deviate from the protocol arbitrarily, requires a trusted third party to perform verified secret sharing \cite{ChorGMA85}. In verified secret sharing, the trusted third party creates initial cryptographic commitments for each player's private data. The commitments do not leak any information about the data, and allows honest players to detect misbehavior using zero knowledge proofs. MPC is a very promising tool for our problem, but a trusted third party able to eliminate strategic behavior does not yet exist in the transportation industry, therefore a key objective of this work is to develop mechanisms to defend against strategic behavior. 

\textit{In Summary -} Our goal in this work is to develop a protocol that enables a mobility provider to share insights from its data to a municipal authority in a privacy-preserving and verifiable manner. Existing work in accountability and misbehavior detection focus on networks with many agents and rely on honest majority. Such assumptions, however, are not realistic for interactions between a municipal authority and a few mobility providers. We thus turn our attention to differential privacy and secure multi-party computation which provide data privacy but require honesty of participating parties. To address this, we develop mechanisms based on cryptography and aggregated roadside measurements to detect dishonest behavior. 


\section{Model \& Problem Description}\label{sec:model}

In this section we present a model for a city's transportation network and formulate a data Privacy for Mobility Management (PMM) problem. Section~\ref{sec:model:roadnet} introduces a mathematical representation of a city's transportation network along with the demand and mobility providers. In Section~\ref{sec:model:problem} we formalize the notion of data privacy using secure multi-party computation, and introduce assumptions on user behavior that we will need to construct verifiable protocols. We then formally introduce the PMM problem and describe several transportation problems that can be formulated in the PMM framework. 

\subsection{Transportation Network Model}\label{sec:model:roadnet}

\textit{Transportation Network -} Consider the transportation network of a city, which we represent as a directed graph $G = (V,E, f)$ where vertices represent street intersections and edges represent roads. For each road $e \in E$ we use an increasing differentiable convex function $f_e : \mathbb{R}_+ \rightarrow \mathbb{R}_+$ to denote the travel cost (which may depend on travel time, distance, and emissions). of the road as a function of the number of vehicles on the road. We will use $n := \abs{V}$ and $m := \abs{E}$ to denote the total number of vertices and edges in $G$ respectively. Time is represented in discrete timesteps of size $\Delta t$. The operation horizon is comprised of $T+1$ timesteps as $\cT := \bigbrace{0,\Delta t, 2 \Delta t, ..., T \Delta t}$. \\

\noindent \textit{Mobility Provider -} A Mobility Provider (MP) is responsible for serving the transportation demand. It does so by choosing a routing $x$ of its vehicles within the transportation network. The routing must satisfy multi-commodity network flow constraints (see Supplementary Material \smi{} and \smii{} \ifarXiv \else of the extended version \cite{TsaoYangZoepfPavoneE2021} \fi for explicit descriptions of these constraints) and the MP will choose a feasible flow that maximizes its utility function $J_{\text{MP}}$. Some examples of MPs are ridehailing companies, bus companies, train companies, and micromobility (i.e., bikes \& scooters) companies. \\

\noindent \textit{Transportation Demand Data -} The MP's demand data is a list of completed trips $\Lambda := \bigbrace{\lambda_1,...,\lambda_q}$, where $\lambda_i$ contains the following basic metadata about the $i$th trip:
\begin{itemize}
    \item[] Pickup location, Dropoff location, Request time, Match time (i.e., the time at which the user is matched to a driver), Pickup time, Dropoff time, Driver wage, Trip fare, Trip trajectory (i.e., the vehicle's trajectory from the time the vehicle is matched to the rider until the time the rider is dropped off at their destination), Properties of the service vehicle.
\end{itemize}
For locations $i,j \in V$ and a timestep $t$, we use $\Lambda(i,j,t)$ to denote the number of users in the data set who request transit from location $i$ to location $j$ at time $t$. 

\begin{remark}[Multiple Mobility Providers]
We can consider settings where there are multiple mobility providers, $\text{MP}_1, \text{MP}_2,...,\text{MP}_\ell$, where $\Lambda_j$ is the demand data of $\text{MP}_j$. The demand data set for the whole city is thus $\Lambda = \cup_{j=1}^\ell \Lambda_j$. 
\end{remark}

\noindent \textit{Ridehailing Periods -} For MPs that operate ridehailing services, a ridehailing vehicle's trajectory is often divided into three different periods (with Period 0 often ignored):
\begin{itemize}
    \item[] \textit{Period 0:} The vehicle is not online with a platform. The driver may be using the vehicle personally. 
    \item[] \textit{Period 1:} The vehicle is vacant and has not yet been assigned to a rider.
    \item[] \textit{Period 2:} The vehicle is vacant, but it has been assigned to a rider, and is en route to pickup.
    \item[] \textit{Period 3:} The vehicle is driving a rider from its pickup location to its dropoff location.
\end{itemize}

\subsection{Objective: Privacy for Mobility Management (PMM)}\label{sec:model:problem}

In the data Privacy for Mobility Management (PMM) problem, a Municipal Authority (MA) wants to compute a function $g(\Lambda)$ on the travel demand, where $g(\Lambda)$ is some property of $\Lambda$ that can inform MA on how to improve public policies. There are two main obstacles to address: privacy and verifiability. 

Privacy issues arise since trip information may contain sensitive customer information as well as trade secrets of Mobility Providers (MP). For this reason MPs may be reluctant to contribute their data for MA's computation of $g(\Lambda)$. This motivates the following notion of privacy:

\begin{definition}[Privacy in Multi-Party Computation]\label{def:MPCprivacy}
Suppose $\text{MP}_1,...\text{MP}_\ell$ serve the demands $\Lambda_1,...,\Lambda_\ell$ respectively, and we denote $\Lambda = \cup_{i=1}^\ell \Lambda_i$. We say a protocol for computing $g(\Lambda)$ between a MA and several MPs is privacy preserving if 
\begin{enumerate}
    \item MA learns nothing about $\Lambda$ beyond the value of $g(\Lambda)$. 
    \item For any pair $i \neq j$, $\text{MP}_i$ learns nothing about $\Lambda_j$ beyond the value of $g(\Lambda)$.
\end{enumerate}
\end{definition}

Verifiability issues arise if there is incentive misalignment between the players. In particular, if the MA or a MP can increase their utility by deviating from the protocol, then the computation of $g(\Lambda)$ may be inaccurate. To address this issue, we need the protocol to be verifiable, as described by Definition~\ref{def:verifiable}. The following assumption is necessary to ensure accurate reporting of demand (See Supplementary Material~\smv{} \ifarXiv \else of the extended version \cite{TsaoYangZoepfPavoneE2021} \fi for more details):

\begin{assumption}[Strategic Behavior]\label{assump:honest}
We assume in this work that drivers and customers of the transportation network will behave honestly (by this we mean they will always follow the protocol), but MA and MPs may act strategically to maximize their own utility functions. 
\end{assumption}

\begin{definition}[Verifiable Protocol]\label{def:verifiable}
A protocol for computing $g(\Lambda)$ is verifiable under Assumption~\ref{assump:honest} if:
\begin{enumerate}
    \item Any deviation from the protocol by the MA can be detected by the MPs provided that all riders and drivers act honestly (i.e., follow the protocol). 
    \item Any deviation from the protocol by an MP can be detected by the MA provided that all riders and drivers act honestly. 
\end{enumerate}
\end{definition}

\noindent Our objective in this paper is to present a PMM protocol, which is defined below.

\begin{definition}[PMM Protocol]
A PMM protocol between a MA and $\text{MP}_1,...\text{MP}_\ell$ can, given any function $g$, compute $g(\Lambda)$ for MA while ensuring privacy and verifiability as described by Definitions~\ref{def:MPCprivacy} and~\ref{def:verifiable} respectively. 
\end{definition}

\begin{remark}[Admissible Queries and Differential Privacy]\label{rem:admissible_queries}
While a PMM protocol hides all information about $\Lambda$ beyond the value of $g(\Lambda)$, $g(\Lambda)$ itself may contain sensitive information about $\Lambda$. The extreme case would be if $g$ is the identity function, i.e., $g(\Lambda) = \Lambda$. In such a case, the MPs should reject the request to protect the privacy of its customers. More generally, MPs should reject functions $g$ if $g(\Lambda)$ is highly correlated with sensitive information in $\Lambda$. The precise details as to which functions $g$ are deemed acceptable queries must be decided upon beforehand by MA and the MPs together.

Differential privacy mechanisms provide a principled way to address the sensitivity of $g$ by having MPs include noise in the computation of $g(\Lambda)$. If the noise distribution is chosen according to both the desired privacy level and the sensitivity of $g$ to its inputs, then the output is differentially private. Note that this privacy is not for free; the noise reduces the accuracy of the output. The precise choice of noise distribution is important for both the privacy and accuracy of this method, so ensuring that the randomization step is conducted properly in the face of strategic MAs and MPs is essential. This can be done with a combination of coinflipping protocols and secure multi-party computation, which we describe in Appendix~\ref{sec:ext:differentialprivacy}. 
\end{remark}

\begin{remark}[A note on computational complexity]
The applications we consider in this work do not impose strict requirements on computation times of protocols. Regulation checks can be conducted daily or weekly, and infrastructure improvement initiatives are seldom more frequent than one per week. The low frequency of such queries gives plenty of time to compute a solution. For this reason, we do not expect the computational complexity of the solution to be an issue.
\end{remark}

\noindent We now present some important social decision making problems that can be formulated within the PMM framework.

\subsubsection{Regulation Compliance for Mobility Providers}\label{sec:model:problem:regulation}

Suppose MA wants to check whether a MP is operating within a set of regulations $\rho_1,...,\rho_k$. The metadata contained within each trip includes request time, match time, pickup time, dropoff time, and trip trajectory, which can be used to check regulation compliance. If we define the function $\rho_i(\Lambda)$ to be $1$ if and only if regulation $i$ is satisfied, and $0$ otherwise, then regulation compliance can be determined from the function $g(\Lambda) := \prod_{t=1}^k \rho_t(\Lambda)$. Below are some examples of regulations that can be enforced using trip metadata. 

\begin{example}[Waiting Time Equity]
MP is not discriminating against certain requests due to the pickup or droppoff locations. Specifically, the difference in average waiting time among different regions should not exceed a specified regulatory threshold. 
\end{example}

\begin{example}[Congestion Contribution Limit]
The contribution of MP vehicles (in Period 2 or 3) to congestion should not exceed a specified regulatory threshold. 
\end{example}

\begin{example}[Accurate Reporting of Period 2 Miles]\label{regulation:period}
A ridehailing driver's pay per mile/minute depends on which period they are in. In particular, the earning rate for period 2 is often greater than that of period 1. For this reason, mobility providers are incentivized to report period 2 activity as period 1 activity. To protect ridehailing drivers, accurate reporting of period 2 activity should be enforced.
\end{example}

\begin{example}[Emissions Limit]
The collective emission rate of MP vehicles in Phases 2 and 3. should not exceed a specified regulatory threshold. MP emissions can be computed from the metadata of served trips, in particular the trajectory and vehicle make and model. 
\end{example}

\noindent See Supplementary Material~\smiv{} \ifarXiv \else of the extended version \cite{TsaoYangZoepfPavoneE2021} \fi for further details on formulating the above examples within the PMM framework. 

\subsubsection{Transportation Infrastructure Development Projects}\label{sec:model:problem:infrastructure}

\noindent \textit{Transportation Infrastructure Improvment Projects -} A Municipal Authority (MA) measures the efficiency of the current transportation network via a concave social welfare function $J_{\text{MA}}(x)$. The MA wants to make improvements to the network $G$ through infrastructure improvement projects. Below are some examples of such projects.

\begin{example}[Building new roads]
The MA builds new roads $E_\text{new}$ so the set of roads is now $E \cup E_\text{new}$, i.e., $G$ now has more edges. 
\end{example}

\begin{example}[Building Train tracks]
The MA builds new train routes. Train routes differ from roads in that the travel time is independent of the number of passengers, i.e., there is no congestion effect.
\end{example}

\begin{example}[Adding lanes to existing roads]
The MA adds more lanes to some roads $E' \subset E$. As a consequence, the shape of $f_e$ will change for each $e \in E'$. 
\end{example}

\begin{example}[Adjusting Speed limits]
Similar to adding more lanes, adjusting the speed limit of a road will change its delay function. 
\end{example}

\noindent \textit{Evaluation of Projects -} We measure the utility of a project using a Social Optimization Problem (SOP). An infrastructure improvement project $\theta$ makes changes to the transit network, so let $G_\theta$ denote the transit network obtained by implementing $\theta$. The routing problem $\text{ROUTE}(\theta,\Lambda)$ associated with $\theta$ is the optimal way to serve requests in $G_\theta$ as measured by MP's objective function $J_{\text{MP}}$. Letting $S_{\theta,\Lambda}$ be the set of flows satisfying multi-commodity network flow constraints (See Supplementary Material \smi{} and \smii{}  \ifarXiv \else of the extended version \cite{TsaoYangZoepfPavoneE2021} \fi for time-varying and steady state formulations respectively). for the graph $G_\theta$ and demand $\Lambda$, $\text{ROUTE}(\theta,\Lambda)$ is given by
\begin{align}
    \max & \; J_{\text{MP}}(x) \label{eqn:sop:project}\tag{$\text{ROUTE}(\theta,\Lambda)$} \\
    \text{s.t. } & x \in S_{\theta,\Lambda}. \nonumber 
\end{align}

\begin{definition}[The Infrastructure Development Selection Problem]\label{def:InfraDevProb}
Suppose there are $k$ infrastructure improvement projects $\Theta := \bigbrace{ \theta_1,\theta_2,...,\theta_k }$ available, but the city only has the budget for one project. The city will want to implement the project that yields the most utility, which is determined by the following optimization problem. 
\begin{align}
    \underset{1 \leq i \leq k}{\text{argmax }}  J_{\text{MA}} \bigpar{ \underset{x \in S_{\theta_i,\Lambda}}{\text{argmax }} J_{\text{MP}}(x) }. \label{eqn:sop}\tag{SOP$(\Theta, \Lambda)$}
\end{align}
\end{definition}
In the context of PMM, the function $g$ associated with the infrastructure development selection problem is $g(\Lambda) := \text{SOP}(\Theta,\Lambda)$. 

\subsubsection{Congestion Pricing}

Some ridehailing services allow drivers to choose the route they take when delivering customers. When individual drivers prioritize minimizing their own travel time and disregard the negative externalities they place on other travelers, the resulting user equilibrium can experience significantly more congestion than the social optimum. In these cases, the total travel time of the user equilibrium is larger than that of the social optimum. This gap, known as the price of anarchy, is well studied in the congestion games literature. 

Congestion pricing addresses this issue by using road tolls to incentivize self-interested drivers to choose routes so that the total travel time of all users is minimized. The desired road tolls depend on the demand $\Lambda$, so MA would need help from MPs to compute the prices. Congestion pricing can be formulated in the PMM framework through the query function $g_{\text{cp}}$ described in \eqref{eqn:cong_price}.

When the travel cost is the same as travel time, the prices can be obtained from the Traffic Assignment Problem \cite{Sheffi1985}:
\begin{align}\label{opt:so_tt}
    \min & \; \sum_{e \in E} x_e f_e(x_e) \\
    \text{s.t. } & x = \sum_{o \in V} \sum_{d \in V} x^{od} \nonumber \\
    & x^{od} \succeq 0 \; \forall o \in V, d \in V \nonumber \\
    & \sum_{(u,v) \in E} x^{od}_{(u,v)} - x^{od}_{(v,u)} = \Lambda(o,d) \bigpar{ \mathds{1}_{[u = o]} - \mathds{1}_{[u = d]} } \forall u \in V \nonumber
\end{align}
where $x^{od}_e$ denotes the traffic flow from $o$ to $d$ that uses edge $e$.
The objective measures the sum of the travel times of all requests in $\Lambda$. The desired prices are then given by:
\begin{align}\label{eqn:cong_price}
    g_{\text{cp}}(\Lambda) := \bigbrace{ x_e^* f_e'(x_e^*) }_{e \in E} \text{ where } x^* \text{ solves } \eqref{opt:so_tt}.
\end{align}
See Supplementary Material~\smviii{} \ifarXiv \else of the extended version \cite{TsaoYangZoepfPavoneE2021} \fi for more details on congestion pricing.


\section{A high level description of the protocol}\label{sec:highlevel}

We focus our discussion on the case where there is one MP. The protocol we will present can be generalized to the multiple MP setting through secure Multi-party Computation \cite{GoldreichMW87}. The simplest way for MA to obtain $g(\Lambda)$ is via a \textit{non-interactive} protocol where MP sends $\Lambda$ to MA. MA could then compute $g(\Lambda)$ and any other attributes of $\Lambda$ that it wants to know. This simple procedure, however, does not satisfy data privacy, since MA now has full access to the demand $\Lambda$. 

To address this concern, one could use an \textit{interactive} protocol where MA sends a description of the function $g$ to MP, MP then computes $g(\Lambda)$ and sends it to MA. This protocol does not require MP to share the demand $\Lambda$. The problem with this approach is that there is no way for MA to check whether MP computed $g(\Lambda)$ properly, i.e., this approach is not verifiable. This is problematic if there is an incentive for MP to act strategically, e.g., if MP wants to maximize its own revenue, rather than social utility. 

In this paper we present a verifiable interactive protocol, which allows MA to check whether or not the message it receives from MP is in fact $g(\Lambda)$. This will result in a protocol where MA is able to obtain $g(\Lambda)$ without requiring MP to reveal any information about $\Lambda$ beyond the value of $g(\Lambda)$. 

First, we describe a non-confidential way to compute $g(\Lambda)$. We will discuss how to make it confidential in the next paragraph. MP will send a commitment $\sigma = \textsf{MCommit}(\Lambda,r)$ of $\Lambda$ to MA. This commitment will enable MA to certify that the result given to it by MP is computed using the true demand $\Lambda$. The commitment is confidential, meaning it reveals nothing about $\Lambda$, and is binding, meaning that it will be inconsistent with any other demand $\Lambda' \neq \Lambda$. Now suppose MP computes a message $z = g(\Lambda)$. To convince MA that the calculation is correct, MP will construct a witness $w := (\Lambda, r)$. When MA receives the message $z$ and witness $w$, it will compute $C(\sigma, z,w)$, where $C$ is an evaluation algorithm. $C(\sigma, z,w)$ evaluates to \texttt{True} if
\begin{enumerate}
    \item Rider Witness and Aggregated Roadside Audit checks are satisfied. ($\sigma$ was reported honestly)
    \item $\textsf{MCommit}(\Lambda, r) = \sigma$. ($\Lambda$ is the demand that was used to compute $\sigma$).
    \item $g(\Lambda) = z$ ($g$ was evaluated properly.)
\end{enumerate}
If any of these conditions are not met, $C(\sigma,z,w)$ will evaluate to \texttt{False}. Finally, MA will accept the message $z$ only if $C(\sigma, z,w) = \texttt{True}$. 

The approach presented in the previous paragraph is not privacy-preserving because the witness $w$ being sent from MP to MA includes the demand $\Lambda$. Fortunately, we can use zero knowledge proofs to obtain privacy. Given an arithmetic circuit $C$ (which in our case is the evaluation algorithm $C$), it is possible for one entity (the prover) to convince another entity (the verifier) that it knows an input $z,w$ so that $C(\sigma,z,w) = \texttt{True}$ without revealing what $w$ is. This is done by constructing a zero knowledge proof $\pi$ from $(z,w)$ and sending $(z,\pi)$ to the verifier instead of sending $(z,w)$. MA can then check whether $\pi$ is a valid proof for $z$. The proof $\pi$ is zero knowledge in the sense that it is computationally intractable to deduce anything about $w$ from $\pi$, aside from the fact $C(\sigma,z,w) = \texttt{True}$. For our application, the prover will be MP who is trying to convince the verifier, which is MA, that it computed $g(\Lambda)$ correctly. 

This protocol requires MP to send a commitment of the true demand data to MA. This is problematic if MP has incentive to be dishonest, i.e., provide a commitment corresponding to a different dataset. To ensure this does not happen, our protocol uses a Rider Witness incentive to prevent MP from underreporting demand, and Aggregated Roadside Audits to prevent MP from overreporting demand. These two mechanisms establish the verifiability of the protocol, since, as seen in first requirement of $C$, MA will reject the message if either of these mechanisms detect dishonesty. 

\textit{In Summary -} We present a verifiable interactive protocol. First, MP sends a commitment of the demand to MA, which ensures that the report is computed using the true demand. The correctness of this commitment is enforced by Rider Witness and Aggregated Roadside Audits. MA then announces the function $g$ that it wants to evaluate. MP computes a message $z \leftarrow g(\Lambda)$ and constructs a witness $w$ to the correctness of $z$. Since $w$ in general contains sensitive information, it cannot be used directly to convince MA to accept the message $z$. MP computes a zero knowledge proof $\pi$ of the correctness of $z$ from $w$, and sends the message $z$ and proof $\pi$ to MA. MA accepts $z$ if $\pi$ is a valid zero knowledge proof for $z$. 

\textit{Implementation -} To implement our protocol we will use several tools from cryptography. The commitment $\sigma$ is implemented as a Merkle commitment. 
For computing zero knowledge proofs, we will need a zk-SNARK that doesn't require a trusted setup. PLONK \cite{GabizonWC19}, Sonic \cite{MallerBKM19}, and Marlin \cite{ChiesaHMMVW20} using a DARK based polynomial commitment schemes described in \cite{BunzFS20, BlockHRRS21}. Other options include Bulletproofs \cite{BunzBBPWM18} and Spartan \cite{Setty20}.
The cryptographic tools used in the protocol are reviewed in Supplementary Material~\smiii{}\ifarXiv \else of the extended version \cite{TsaoYangZoepfPavoneE2021}\fi.


\section{The Protocol}\label{sec:protocol}

In this section we present our protocol for the PMM problem described in Section~\ref{sec:model:problem}. For clarity and simplicity of exposition we will focus on the case where there is one Mobility Provider. The single MP case can be extended to the multiple MP case via secure multi-party computation \cite{GoldreichMW87}. We present the protocol, which is illustrated in Figure~\ref{fig:protocol:block}, in Section~\ref{sec:protocol:description}. In Section~\ref{sec:protocol:stratproof} we discuss mechanisms used to ensure verifiability of the protocol. 

The protocol uses the following cryptographic primitives: hash functions, commitment schemes, Merkle trees, public key encryption and zero knowledge proofs. Hash functions map data of arbitrary size to fixed size messages, often used to provide succinct identifiers for large datasets. Commitment schemes are a form of verifiable data sharing where a receiver can reserve data from a sender, obtain the data at a later point, and verify that the data was not changed between the reservation and reception times. A Merkle tree is a particular commitment scheme we will use. In public key encryption, every member of a communication network is endowed with a public key and a private key. The public key is like a mailbox which tells senders how to reach the member, and the secret key is the key to the mailbox, so messages can be viewed only by their intended recipients. Zero knowledge proofs, as discussed in Section~\ref{sec:highlevel}, enable a prover to convince a verifier that it knows a solution to a mathematical puzzle without directly revealing its solution. For a more detailed description of these concepts, we refer the reader to the Supplementary Material \smiii{} \ifarXiv \else of the extended version \cite{TsaoYangZoepfPavoneE2021}\fi, where we provide a self-contained introduction of the cryptographic tools used in this work. 

\begin{figure}
    \centering
    \includegraphics[width = \textwidth]{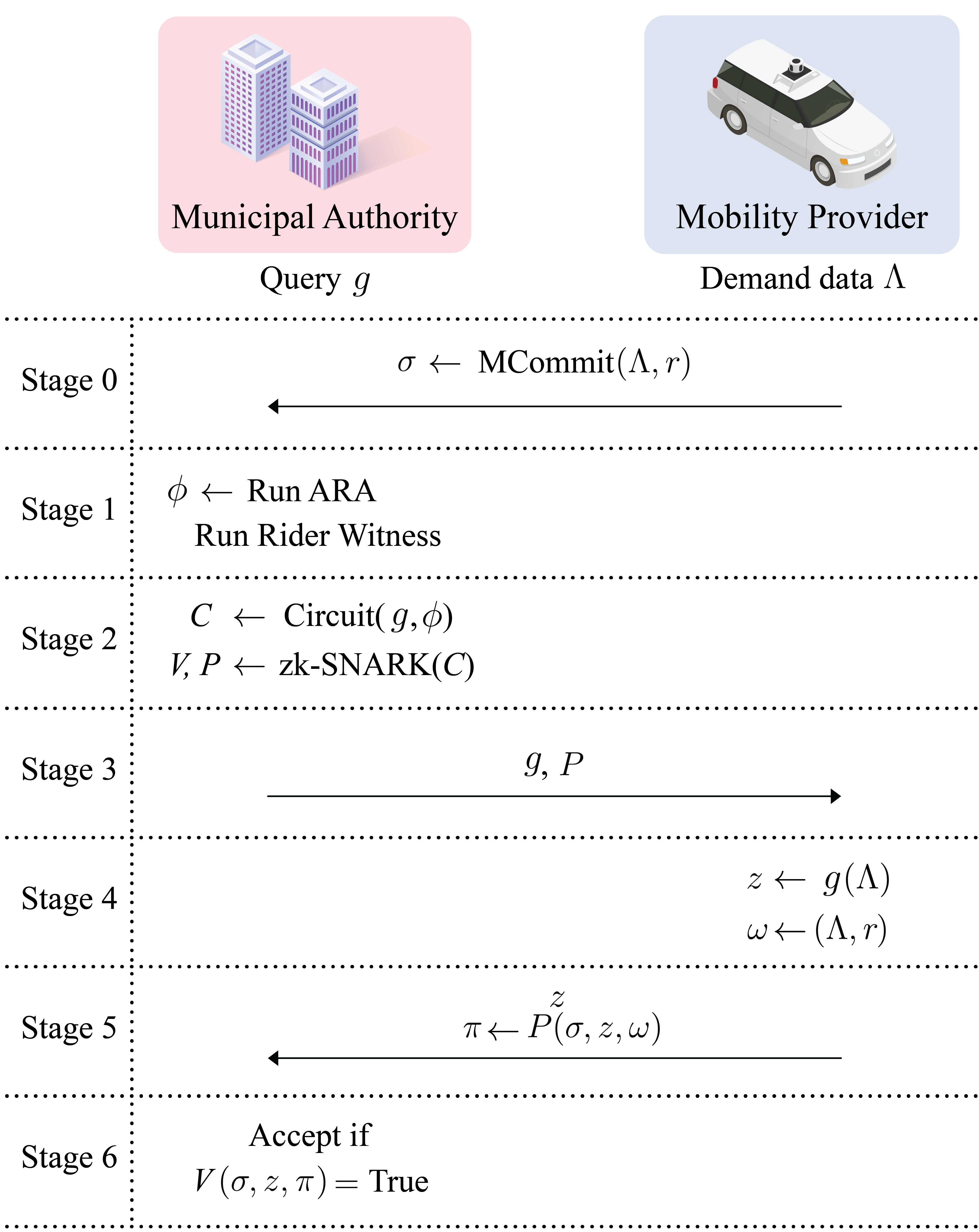}
    \caption{A block diagram of the communication between MA and MP.}
    \label{fig:protocol:block}
\end{figure}

\subsection{Protocol Description}\label{sec:protocol:description}
The protocol entails $6$ stages: \\

\noindent \textit{Stage 0: (Data Collection)} MP serves the demand $\Lambda$ and builds a Merkle Tree $T_\Lambda$ of the demand it serves. MP publishes the root of $T_\Lambda$, which is denoted as $\sigma := \textsf{MCommit}(\Lambda,r)$ so that MA, all riders and all drivers have access to $\sigma$. Here $r$ is the set of nonces used to make the commitment confidential.  \\

\noindent \textit{Stage 1: (Integrity Checks)} MA instantiates Rider Witness and Aggregated Roadside Audits to ensure that $\sigma$ was computed using the true demand $\Lambda$. The description of these mechanisms can be found in Section~\ref{sec:protocol:stratproof}.\\

\noindent \textit{Stage 2: (Message Specifications)} MA specifies to MP the function $g$ it wants to compute. \\

\noindent \textit{Stage 3: (zk-SNARK Construction)} MA constructs an evaluation algorithm $C$ for the function $g$. $\sigma,z$ are public parameters of $C$, and the input to $C$ is a witness of the form $w = (\Lambda_w, r_w, c_w)$, where $r_w$ is a set of nonces, $\Lambda_w$ is a demand matrix, and $c_w$ is an optional input that may depend on $g$ (See Remark~\ref{rem:verify:opt}). $C$ does the following:
\begin{enumerate}
    \item Checks whether the Rider Witness and Aggregated Roadside Audit tests are satisfied (This checks that $\sigma$ was reported honestly),
    \item Checks whether $\textsf{MCommit}(\Lambda_w,r_w) = \sigma$ (This determines whether the provided demand $\Lambda_w$ is the same as the demand that created $\sigma$),
    \item Checks whether $g(\Lambda_w) = z$ (This checks that the message $z$ is computed properly from $\Lambda_w$).
\end{enumerate}
$C$ will evaluate to $\texttt{True}$ if and only if all of those checks pass. Now, using one of the schemes from \cite{GabizonWC19, MallerBKM19, ChiesaHMMVW20, BunzBBPWM18, Setty20}, MA will create a zk-SNARK $(S,V,P)$ for $C$. $S$ is a set of public parameters that describes the circuit $C$, $P$ is a prover function which MP will use to construct a proof, and $V$ is a verification function which MA will use to verify the correctness the MP's proof. It sends $C,(S,V,P), g$ to MP. \\

\noindent \textit{Stage 4: (Function Evaluation)} If the request $g$ is not a privacy-invasive function (see Remark~\ref{rem:admissible_queries}), MP will compute a message $z = g(\Lambda)$ and construct a witness $w := (\Lambda, r, c_w)$ to the correctness of $z$. \\

\noindent \textit{Stage 5: (Creating a Zero Knowledge Proof)} MP uses the zk-SNARK's prover function $P$ to construct a proof $\pi := P(\sigma,z,w)$ that certifies the calculation of $z$. MP sends $z,\pi$ to MA. \\

\noindent \textit{Stage 6: (zk-SNARK Verification)} MA uses the zk-SNARK's verification function $V(\sigma, z,\pi)$ to check whether MP is giving a properly computed message. If this is the case, MA accepts the message $z$. 

\begin{remark}[Computational Gains via Commit-then-Prove]
Steps 2) and 3) of the evaluation circuit $C$ involve different types of computation. This heterogeneity can introduce computational overhead in the zk-SNARK. Commit-and-Prove zk-SNARKs \cite{CampanelliFQ19,CampanelliFFQR20} are designed to handle computational heterogeneities, however existing implementations require a trusted setup. 
\end{remark}

\begin{remark}[Verifying solutions to convex optimization problems]\label{rem:verify:opt}
If $g(\Lambda_w)$ is the solution to a convex optimization problem parameterized by $\Lambda_w$, (e.g., $g(\Lambda_w) = \text{SOP}(\Theta,\Lambda_w)$ or congestion pricing $g_{\text{cp}}(\Lambda_w)$), then computing $g(\Lambda_w)$ within the evaluation algorithm $C$ may cause $C$ to be a large circuit, thus making evaluation of $C$ computationally expensive. Fortunately, this can be avoided by leveraging the structure of convex problems. If $z = g(\Lambda_w)$, we can include the optimal primal and dual variables associated with $z$ in the optional input $c_w$. This way, checking the optimality of $z$ can be done by checking that $c_w$ satisfy the KKT conditions rather than needing to re-solve the problem. 
\end{remark}

\subsection{Ensuring accuracy of $\sigma$}\label{sec:protocol:stratproof}

The protocol presented in the previous section requires MP to share a commitment to the true demand $\Lambda$. 
However, scenarios exist where the MP may face direct or indirect incentives to misreport demand, such as per-ride fees, congestion charges, or other regulations that may constrain MP operations. In this section we present mechanisms to ensure that MP submits a commitment $\sigma = \textsf{MCommit}(\Lambda,r)$ corresponding to the true demand $\Lambda$ rather than a commitment $\sigma' = \textsf{MCommit}(\Lambda',r)$ corresponding to some other demand $\Lambda'$. Specifically, we present Rider Witness and Aggregated Roadside Audits which detect underreporting and overreporting of demand respectively.

\ifarXiv
The Rider Witness mechanism described in Section~\ref{sec:protocol:mech:RW} prevents MP from omitting real trips from its commitment. Under the Rider Witness mechanism, each rider is given a receipt for their trip signed by MP. By signing a receipt, the trip is recognized as genuine by MP. Since $\sigma' = \textsf{MCommit}(\Lambda',r)$ is a Merkle commitment, for each $\lambda' \in \Lambda'$, MP can provide a proof that $\lambda'$ is included in the calculation of $\sigma'$. Conversely, if $\lambda \not\in \Lambda'$, MP is unable to forge a valid proof to claim that $\lambda$ is included in the calculation of $\sigma'$. Therefore if there exists a genuine trip $\lambda \in \Lambda$ that is not included in $\Lambda'$, then that rider can report its receipt to MA. MP cannot provide a proof that $\lambda$ was included, and since the receipt of $\lambda$ is signed by MP, this is evidence that MP omitted a genuine trip from $\sigma'$. If this happens, MP is fined, and the reporting rider is rewarded.

The Aggregated Roadside Audit mechanism described in Section~\ref{sec:protocol:mech:ARA} prevents MP from adding fictitious trips into its commitment. Due to Rider Witness, MP will not omit genuine trips, so $\sigma' = \textsf{MCommit}(\Lambda',r)$ where $\Lambda \subseteq \Lambda'$. Recall that the trip metadata includes the trajectory. If $\Lambda'$ contains fictitious trips, then the road usage reported by $\Lambda'$ will be greater than what happens in reality. Thus if MA measures the number of passenger carrying vehicles that traverse each road, then it will be able to detect if MP has included fictitious trips. However, auditing every road can lead to privacy violations. Therefore, the audits are aggregated so that MA obtains the total volume of passenger carrying traffic in the entire network, but not the per-road traffic information. 
\fi 
\subsubsection{Rider Witness: Detecting underreported demand}\label{sec:protocol:mech:RW}

In this section, we present a Rider Witness mechanism to detect omission or tampering of the demand $\Lambda$. Concretely, if a MP sends to MA a Merkle commitment $\sigma' = \textsf{MCommit}(\Lambda',r)$ which underreports demand, i.e., $\Lambda \setminus \Lambda'$ is non-empty, then Rider Witness will enable MA to detect this. MA can impose fines or other penalties when such detection occurs to deter MP from underreporting the demand.

\textit{Rider Witness Incentive Mechanism -} At the beginning of Stage 0 (Data Collection) of the protocol, MP constructs a public key and private key pair $(\textsf{pk}_{\text{mp}}, \textsf{sk}_{\text{mp}})$ to use for digital signatures. The payment process is as follows: When the $i$th customer is delivered to their destination, the customer will send a random nonce $r_i$ to MP. MP will respond with a receipt $\bigpar{ H(r_i || \lambda_i), \sigma_i }$, where $\sigma_i := \textsf{sign}(\textsf{sk}_{\text{mp}}, H(r_i||\lambda_i))$ is a digital signature certifying that MP recognizes $\lambda_i$ as an official ride (here $||$ represents concatenation of binary strings). Here $H$ is SHA256, so that $H(r_i || \lambda_i)$ is a cryptographic commitment to the trip $\lambda_i$. The customer is required to pay the trip fare only if $\textsf{verify}(\textsf{pk}_{\text{mp}}, H(r_i || \lambda_i), \sigma_i) = \texttt{True}$, i.e., they received a valid receipt. 

\begin{definition}[Rider Witness Test]
Given a commitment $\sigma'$ reported by MP to MA, each rider who was served by MP requests a Merkle proof that their ride is included in the computation of $\sigma'$. If there exists a valid\footnote{In the sense that $\textsf{verify}(\textsf{pk}_{\text{mp}}, H(r_i||\lambda_i), \sigma_i) = \texttt{True}$.} ride receipt $(H(r_i||\lambda_i),\sigma_i)$ for which MP cannot provide a Merkle proof, then the customer associated with $\lambda_i$ will report $(H(r_i||\lambda_i),\sigma_i)$ to MA. MA checks if $\sigma_i$ is a valid signature for $H(r_i||\lambda_i)$, and if so, directly asks MP for a Merkle Proof that $\lambda_i$ is included in the computation of $\sigma'$. If MP is unable to provide the proof, then $\sigma'$ fails the Rider Witness Test.
\end{definition}

\begin{observation}[Efficacy of Rider Witness]\label{obv:RW}
Under Assumption~\ref{assump:honest}, if MP submits a commitment $\sigma' = \textsf{MCommit}(\Lambda',r)$ which omits a ride, i.e., $\Lambda \setminus \Lambda'$ is non-empty, then $\sigma'$ will fail the Rider Witness Test. 
\end{observation}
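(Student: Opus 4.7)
The plan is to fix an arbitrary trip $\lambda_i \in \Lambda \setminus \Lambda'$ (guaranteed to exist by hypothesis) and track what happens to that trip's customer through the Rider Witness mechanism. The goal is to show that (a) this customer is in possession of a valid, MP-signed receipt that MA will accept, and (b) MP is unable to produce a convincing Merkle proof of inclusion for $\lambda_i$ in the tree rooted at $\sigma'$, so the test is triggered and failed.

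First I would set up the receipt side. Because $\lambda_i$ corresponds to an actual ride that MP served, the payment protocol from Section~\ref{sec:protocol:mech:RW} was executed: the rider sent a nonce $r_i$ and only paid after receiving a receipt $(H(r_i \| \lambda_i), \sigma_i)$ satisfying $\text{verify}(\text{pk}_{mp}, H(r_i \| \lambda_i), \sigma_i) = \texttt{True}$. By Assumption~\ref{assump:honest}, the rider is honest, so they retain this receipt, request a Merkle proof of inclusion from MP, and, failing to receive one, forward $(H(r_i \| \lambda_i), \sigma_i)$ to MA. The signature check MA performs is identical to the one the rider performed at payment time and therefore passes, so MA proceeds to demand a Merkle proof from MP. This part is essentially bookkeeping about the protocol stages and does not require any cryptographic argument beyond the correctness of the digital signature scheme.

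The substantive step, and the one I expect to be the main obstacle, is showing that MP is computationally unable to produce a Merkle proof that $\lambda_i$ is a leaf of the tree with root $\sigma'$. Because $\lambda_i \notin \Lambda'$, the value $H(r_i \| \lambda_i)$ is not (except with negligible probability over $r_i$) equal to any of the level-1 hash values $h_{1,j}$ used to build $\sigma'$. A valid Merkle proof, as defined in Section on Merkle trees, is a sequence of sibling hashes which, when folded together with $H(r_i \| \lambda_i)$ according to the verification recursion, yields $\sigma'$. If MP could exhibit such a sequence, then either the final computed value coincidentally equals $\sigma'$ without matching the tree MP actually built from $\Lambda'$, which directly yields a collision of $H$ at some level of the tree, or else it coincides with an actual internal node, in which case one of the folding steps produces a collision at level $1$ between $H(r_i \| \lambda_i)$ and $H(r_j \| \lambda_j)$ for some $\lambda_j \in \Lambda'$. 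Either outcome contradicts the collision resistance of the underlying hash function.

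Finally I would assemble the pieces: MA receives a valid signed receipt, requests a Merkle proof from MP, MP cannot produce one (by the collision-resistance argument above), and so by the definition of the Rider Witness Test the commitment $\sigma'$ fails. The only delicate point worth stating explicitly in the writeup is the negligible-probability caveat intrinsic to any collision-resistance argument; under the standard convention that ``computationally intractable'' events are treated as never occurring, the conclusion holds.
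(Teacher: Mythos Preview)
Your proposal is correct and follows essentially the same approach as the paper: fix an omitted ride $\lambda_i$, use Assumption~\ref{assump:honest} to argue the honest rider holds and reports a valid signed receipt, and invoke collision resistance of the hash underlying $\text{MCommit}$ to conclude MP cannot forge a Merkle inclusion proof for $\lambda_i$ against $\sigma'$. If anything, your treatment of the collision-resistance step is slightly more explicit than the paper's, which simply asserts that $\text{MCommit}$ is collision-resistant and hence a proof cannot be forged.
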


\begin{proof}[Proof of Observation~\ref{obv:RW}]
If $\Lambda \not\subseteq \Lambda'$, then there exists some $\lambda_i$ which is in $\Lambda$ but not $\Lambda'$. Suppose Alice was the rider served by ride $\lambda_i$. Forging a proof that $\lambda_i \in \Lambda'$ requires finding a hash collision for the hash function used in the Merkle commitment. Since $\textsf{MCommit}$ is implemented using a cryptographic hash function (e.g., SHA256), it is computationally intractable to find a hash collision, and thus MP will be unable to forge a valid proof that $\lambda_i \in \Lambda'$. 

If MP does not provide Alice a valid proof within a reasonable amount of time (e.g., several hours), Alice can then report $\bigpar{ H(r_i||\lambda_i),\sigma_i }$ to MA. This reporting does not compromise Alice's privacy due to the hiding property of cryptographic hash functions. MA will check whether $\textsf{verify}(\textsf{pk}_{\text{mp}}, H(r_i || \lambda_i), \sigma_i) = \texttt{True}$, and if so, means that $\lambda_i$ is recognized as a genuine trip by MP. MA will directly ask MP for a Merkle proof that $H(r_i||\lambda_i) \in T_\Lambda$. Since MP cannot provide a valid proof, this is evidence that a genuine trip was omitted in the computation of $\sigma'$, and hence $\sigma'$ will fail the Rider Witness test. 
\end{proof}


\begin{remark}[Tamperproof Property]
We note that Rider Witness also prevents the MP from altering the data associated with genuine rides. If MP makes changes to $\lambda_i \in \Lambda$ resulting in some $\lambda_i'$, then by collision resistance of $H$, it is computationally infeasible to find $r'$ so that $H(r_i||\lambda_i) = H(r_i'||\lambda_i')$. If such a change is made, then $H(r_i'||\lambda_i')$ is included into the computation of $\sigma'$ instead of $H(r_i||\lambda_i)$. This means $(H(r_i||\lambda_i),\sigma_i)$ becomes a valid witness that data tampering has occurred.
\end{remark}

\begin{remark}[Receipts are Unforgeable]
Note that it is not possible for a rider to report a fake ride $\lambda' \not\in \Lambda$ to MA. This is because the corresponding signature $\sigma'$ cannot be forged without knowing MP's secret key $\textsf{sk}_{\text{mp}}$. Therefore, assuming $\textsf{sk}_{\text{mp}}$ is only known to MP, only genuine trips can be reported.
\end{remark}

\begin{remark}[Honesty of riders]
The Rider Witness mechanism assumes that riders are honest, i.e., they will not collude with MP by accepting invalid receipts. 
\end{remark}

\subsubsection{Aggregated Roadside Audits: Detecting overreported demand}\label{sec:protocol:mech:ARA}

In this section we present an Aggregated Roadside Audit (ARA) mechanism to detect overreporting of demand. Concretely, if MP announces a commitment $\sigma' = \textsf{MCommit}(\Lambda', r)$, where $\Lambda'$ is a strict superset of $\Lambda$ (i.e., $\Lambda' \setminus \Lambda$ is non-empty), then ARA will enable MA to detect this. Thus between ARA and Rider Witness, MA can detect if MP commits to a demand that is not $\Lambda$. \\

\textit{Aggregated Roadside Audits -} Due to the Rider Witness mechanism, we can assume that MP submits a commitment $\sigma'$ computed from $\Lambda'$ satisfying $\Lambda \subseteq \Lambda'$, i.e., $\Lambda'$ is a superset of $\Lambda$. For an edge $e \in E$ and a demand $\Lambda$, define
\begin{align}\label{eqn:ARA}
    \varphi(e, \Lambda) := \sum_{\lambda \in \Lambda} \mathds{1}_{[\lambda \text{ traverses } e]}
\end{align}
to be the number of trips that traversed $e$ during passenger pickup (Period 2) or passenger delivery (Period 3). Since trip route is provided in the trip metadata, $\varphi(e, \Lambda)$ can be computed from $\Lambda$. 

\begin{definition}[ARA Test]
The Aggregated Roadside Audit places a sensor on every road to conduct an audit on each road $e \in E$ to measure $\varphi(e,\Lambda)$. These values are then aggregated as $\phi := \sum_{e \in E} \varphi(e,\Lambda)$. A witness $w = (\Lambda_w, r_w, c_w)$ passes the ARA test if and only if  
\begin{align}
    \sum_{e \in E} \varphi(e, \Lambda_w) = \phi. \tag{ARA}
\end{align}
\end{definition}

\begin{observation}[Efficacy of Aggregated Roadside Audits]\label{obv:ARA}
Under Assumption~\ref{assump:honest}, if MP submits a commitment $\sigma' = \textsf{MCommit}(\Lambda',r)$ to a strict superset of the demand, i.e., $\Lambda \subset \Lambda'$, then any proof submitted by MP will either be inconsistent with $\sigma'$ or will fail the ARA test. Hence MP cannot overreport demand. 
\end{observation}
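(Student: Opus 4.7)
The plan is to argue by cases based on what witness MP could submit, and show that in every case either the commitment binding fails or the ARA test is violated. The core intuition is simple: the aggregated sensor measurement $\phi$ is ground truth based on the real physical demand $\Lambda$, whereas any overreporting witness must account for strictly more vehicle-edge traversals than $\phi$ admits.

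First I would set up the scenario using the earlier Rider Witness result: by Observation~\ref{obv:RW}, we may already assume $\Lambda \subseteq \Lambda_w$ for any witness $(\Lambda_w, r_w, c_w)$ that MP presents, and by the present hypothesis the submitted commitment is $\sigma' = \text{MCommit}(\Lambda', r)$ with $\Lambda \subsetneq \Lambda'$. Next I would invoke consistency with $\sigma'$: for the zk-SNARK circuit $C$ to accept, the second check requires $\text{MCommit}(\Lambda_w, r_w) = \sigma'$. By the collision-resistance of the underlying cryptographic hash function (which lifts to collision-resistance of $\text{MCommit}$, as used in the Merkle-tree construction of Section~\ref{sec:crypto}), it is computationally intractable for MP to produce any $(\Lambda_w, r_w) \neq (\Lambda', r)$ satisfying this equation. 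So, up to the negligible probability of breaking collision-resistance, we may take $\Lambda_w = \Lambda'$; any other choice makes the proof inconsistent with $\sigma'$, which is the first escape clause in the statement.

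Now I would derive the ARA violation. Pick any $\lambda^* \in \Lambda' \setminus \Lambda$, which is non-empty by strict containment. By the transportation-network model, every served trip has a trajectory consisting of at least one edge traversed in Period~2 or Period~3 (otherwise the rider is never moved from pickup to dropoff), hence
\begin{equation*}
\sum_{e \in E} \mathds{1}_{[\lambda^* \text{ traverses } e]} \;\geq\; 1.
\end{equation*}
Combining this with $\Lambda \subseteq \Lambda' \setminus \{\lambda^*\}$ and the definition \eqref{eqn:ARA} of $\varphi$,
\begin{equation*}
\sum_{e \in E} \varphi(e, \Lambda_w) \;=\; \sum_{e \in E} \varphi(e, \Lambda') \;\geq\; \sum_{e \in E} \varphi(e, \Lambda) + 1 \;=\; \phi + 1 \;>\; \phi,
\end{equation*}
since the roadside sensors honestly measure the true per-road passenger-carrying counts $\varphi(e,\Lambda)$ under Assumption~\ref{assump:honest} (drivers follow their actual trajectories, and the sensors belong to MA's physical infrastructure). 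Hence the ARA equality $\sum_e \varphi(e, \Lambda_w) = \phi$ is violated and the witness fails the ARA test.

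I expect the main obstacle to be a clean statement of the two escape branches in one sentence: the collision-resistance argument is what forces the witness data to equal the committed data, and only after that reduction does the ARA inequality kick in. A minor subtlety worth noting is the implicit assumption that each committed trip has a nontrivial trajectory; if MP were allowed to submit ``empty'' phantom trips contributing zero to $\varphi$, the counting argument would need to be refined (e.g., by also requiring the commitment to encode the trajectory in a way the circuit $C$ can inspect, which is already required in order to evaluate $\varphi$ inside the ARA check).
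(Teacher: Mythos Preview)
Your proposal is correct and follows essentially the same two-step structure as the paper's proof: first use collision-resistance of $\text{MCommit}$ to force $\Lambda_w = \Lambda'$, then show the ARA equality fails because the extra trip(s) contribute strictly positive edge-traversal counts. The only cosmetic difference is that the paper isolates a single edge $e'$ on the fictitious trip's route and shows $\varphi(e',\Lambda) < \varphi(e',\Lambda')$ before summing, whereas you bound the aggregate sum directly by $\phi + 1$; your invocation of Observation~\ref{obv:RW} at the outset is unnecessary here (the containment $\Lambda \subset \Lambda'$ is already the hypothesis), but it does no harm.
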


\begin{proof}[Proof of Observation~\ref{obv:ARA}]

Suppose $\Lambda'$ is a strict superset of $\Lambda$, which means that there exists some $\lambda' \in \Lambda' \setminus \Lambda$. Then there must exist some $e' \in E$ for which $\varphi(e', \Lambda') > \varphi(e',\Lambda)$. In particular, any edge in the trip route of $\lambda'$ will satisfy this condition. With the inclusion of the ARA test, MP is unable to provide a valid witness for MA's evaluation algorithm $C$ (and as a consequence, will be unable to produce a valid zero knowledge proof) for the following reason: 
\begin{enumerate}
    \item $\textsf{MCommit}$ is a collision-resistant function (since it is built using a cryptographic hash function $H$), so because $\sigma' = \textsf{MCommit}(\Lambda',r)$, it is computationally intractable for MP to find $\Lambda'' \neq \Lambda'$ and nonce values $r''$ so that $\textsf{MCommit}(\Lambda'',r'') = \sigma'$. Therefore, in order to satisfy condition 2 of $C$ (see \textit{Stage 3} of Section~\ref{sec:protocol:description}), MP's witness must choose $\Lambda_w$ to be $\Lambda'$. 
    \item However, $\Lambda'$ will not pass the ARA test. To see this, note that $(a)$ $\Lambda \subseteq \Lambda'$ implies that $\varphi(e,\Lambda) \leq \varphi(e,\Lambda')$ for all $e \in E$. Furthermore, $(b)$ there exists an edge $e'$ where the inequality is strict, i.e., $\varphi(e',\Lambda) < \varphi(e', \Lambda')$. From this, we see that
    \begin{align*}
        \phi = \sum_{e \in E} \varphi(e,\Lambda) &= \varphi(e',\Lambda) + \sum_{e \in \Lambda, e \neq e'} \varphi(e,\Lambda) \\
        &\overset{(a)}{\leq} \varphi(e',\Lambda) + \sum_{e \in \Lambda, e \neq e'} \varphi(e,\Lambda') \\
        &\overset{(b)}{<} \varphi(e',\Lambda') + \sum_{e \in \Lambda, e \neq e'} \varphi(e,\Lambda') \\
        &= \sum_{e \in E} \varphi(e,\Lambda'),
    \end{align*}
    i.e., if the witness passes condition 2 of $C$, then it will fail the ARA test. 
\end{enumerate}
\end{proof}

\begin{figure}
    \centering
    \includegraphics[width = 0.5\textwidth]{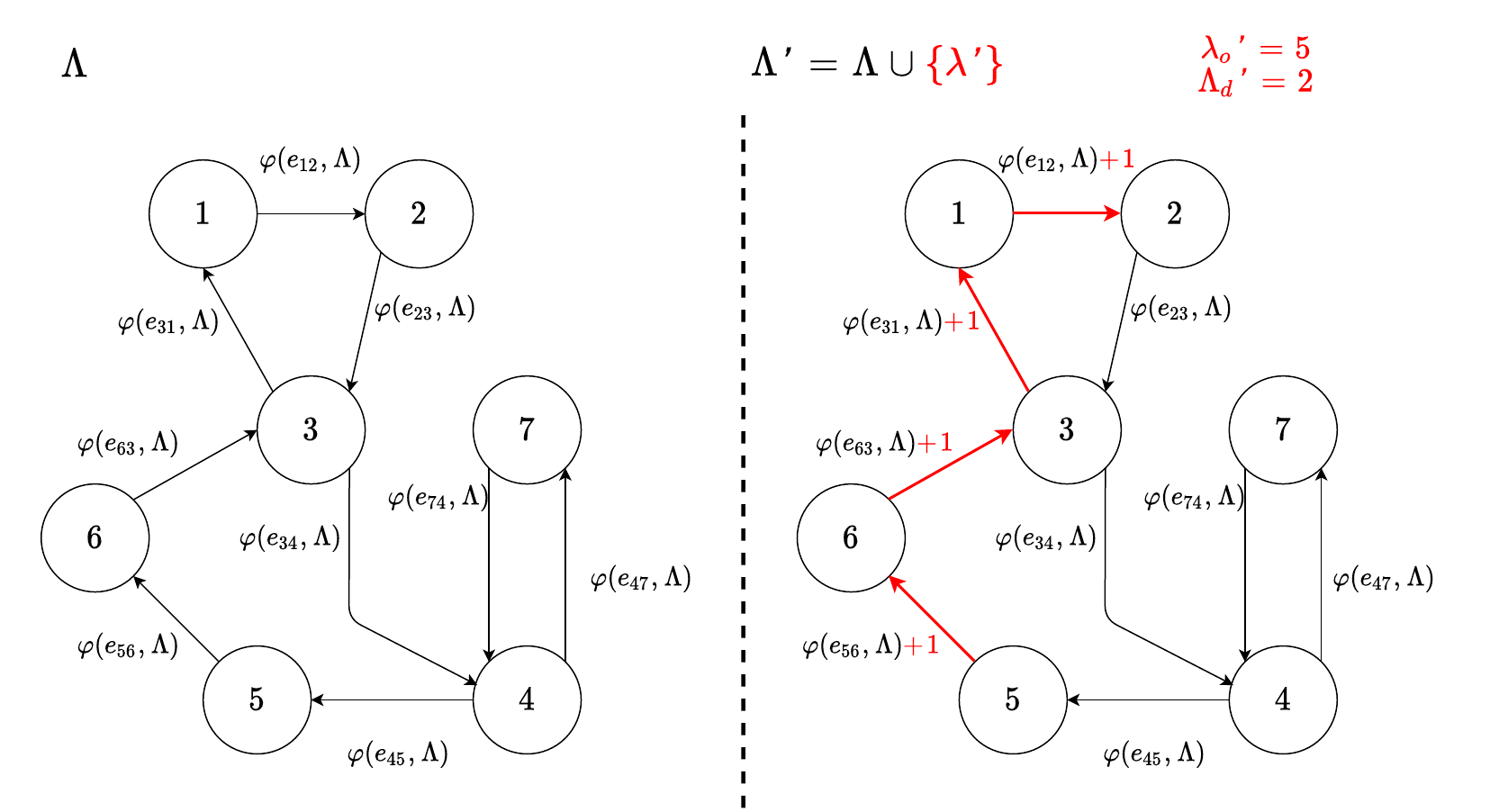}
    \caption{An example of ARA. The true demand is $\Lambda$, which results in traffic shown on the left. Here $\varphi(e_{ij},\Lambda)$ is the total number of trips in $\Lambda$ that use the edge from $i$ to $j$. Suppose MP submits a commitment to $\Lambda' = \Lambda \cup \bigbrace{\lambda'}$, i.e., inserts a fake trip $\lambda'$ into the commitment. In this example, $\lambda'$ is a fake trip from $5$ to $2$ that MP claims was served via the route $\bigbrace{e_{56}, e_{63}, e_{31}, e_{12}}$ (shown in red on the right). $\lambda'$ increases the total traffic on the roads $e_{56}, e_{63}, e_{31}, e_{12}$ and as a result, we have $\sum_{e \in E} \varphi(e,\Lambda') = \phi + 4$.}
    \label{fig:ARA:example}
\end{figure}

Therefore the value of $\phi$ can be used to detect fictitious rides. See Figure~\ref{fig:ARA:example} for a visualization of ARA. In the following remark, we present a variant of ARA that is robust to measurement errors. 

\begin{remark}[Error Tolerance in ARA]
Trip trajectories are often recorded via GPS, so GPS errors can lead to inconsistencies between ARA sensor measurements and reported trajectories. To prevent an honest MP from failing the ARA test due to GPS errors, one can use an error tolerant version of the ARA test defined below
\begin{align*}
    \abs{ \phi - \sum_{e \in E} \varphi(e, \Lambda_w) } \leq \epsilon \phi 
\end{align*}
where $\epsilon \in [0,1]$ is a tuneable tolerance parameter to account for GPS errors while still detecting non-negligible overreporting of demand.
\end{remark}

\begin{remark}[Honesty of Drivers]
The correctness of ARA presented in Observation~\ref{obv:ARA} assumes that drivers are honest when declaring their current period to ARA sensors, e.g., a driver who is in period 3 will not report themselves as period 1 or 2. 
\end{remark}

Two challenges that arise in the computation of $\phi$ are privacy and honesty, which are described below.

\begin{remark}[Privacy-Preserving computation of $\phi$]
The na{\"i}ve way to compute $\phi$ is for MA to collect the values $\varphi(e,\Lambda)$ from each road. This, however, can compromise data privacy. Indeed, if there is only 1 request in $\Lambda$, then measuring the number of customer carrying vehicles that traverse each link exposes the trip route of that request: Edges that are traversed 1 time are in the route, and edges that are traversed 0 times are not. More generally, observing $\varphi(e,\Lambda)$ on all roads $e \in E$ exposes trip routes to or from very unpopular locations.
\end{remark}

\begin{remark}[Honest computation of $\phi$]
It is essential that MA acts truthfully when taking measurement and computing $\phi$ in ARA, otherwise MP will be wrongfully accused of dishonesty.
\end{remark}

Fortunately, the ARA sensors can use public key encryption to share their data with each other to compute $\phi$ in a privacy-preserving and honest way so that MA cannot learn $\varphi(e,\Lambda)$ for any $e \in E$ even if it tries to eavesdrop on the communication between the sensors. After $\phi$ has been sent to MA and the protocol has finished, the data on the sensors should be erased. We describe this process in Section~\ref{sec:protocol:mech:ARA:sensors}. 

\subsubsection{Implementation details for ARA}\label{sec:protocol:mech:ARA:sensors}

In this section we describe the implementation details of ARA to ensure that the computation of $\phi$ is both privacy-preserving and accurate. \\

\textit{ARA Sensors -} To implement ARA, MA designs a sensor to detect MP vehicles. Concretely, the sensor records the current period of all MP vehicles that pass by. For communication, the sensor will generate a random public and private key pair, and share its public key with the other sensors. The sensor should have hardware to enable it to encrypt and decrypt messages it sends and receives, respectively. To ensure honest auditing by MA, these sensors are inspected by MP to ensure that they detect MP vehicles properly, key generation, encryption and decryption are functioning properly, and that there are no other functionalities. Once the sensors have passed the inspection, the following storage and communication restrictions are placed on them:
\begin{enumerate}
    \item The device can only transmit data if it receives permission from both MP and MA.
    \item The device can only transmit to addresses (i.e., public keys) that are on its sender whitelist. The sender whitelist is managed by both MA and MP, i.e., an address can only be added with the permission of MA and MP. 
    \item The device can only receive data from addresses that are on its receiver whitelist. The receiver whitelist is managed by both MA and MP. 
    \item The device's storage can be remotely erased with permission from both MP and MA.
\end{enumerate}

\textit{Deployment -} To conduct ARA, a sensor is placed on every road, and will record the timestamp and period information of MP vehicles that pass by during the operation period. During operation, both the sender and receiver whitelists should be empty. As a consequence, MA cannot retrieve the sensor data. After the operation period ends and MP has sent a commitment $\sigma$ to MA, MP and MA conduct a coin flipping protocol to choose one sensor at random to elect as a leader (the leader is elected randomly for the sake of robustness. If the leader is the same every time, then the system would be unable to function if this sensor malfunctions or is compromised in any way). A coin flipping protocol is a procedure where several parties can generate unbiased random bits. The leader sensor's public key is added to the whitelist of all other sensors, and all sensors are added to the leader's receiving whitelist. Each sensor then encrypts and sends its data under the leader sensor's public key. Since the MA does not know the leader sensor's secret key, it cannot decrypt the data even if it intercepts the ciphertexts. The addresses of MA and MP are then added to the leader's sender whitelist. The leader sensor decrypts the data, computes $\phi$ and reports the result to both MA and MP. Once the protocol is over, the sender and receiver whitelists of all sensors are cleared, and MA and MP both give permission for the sensors to delete their data. Figure~\ref{fig:ARA} illustrates the sensor setup for ARA. 
\begin{figure}
    \centering
    \includegraphics[width = 0.5\textwidth]{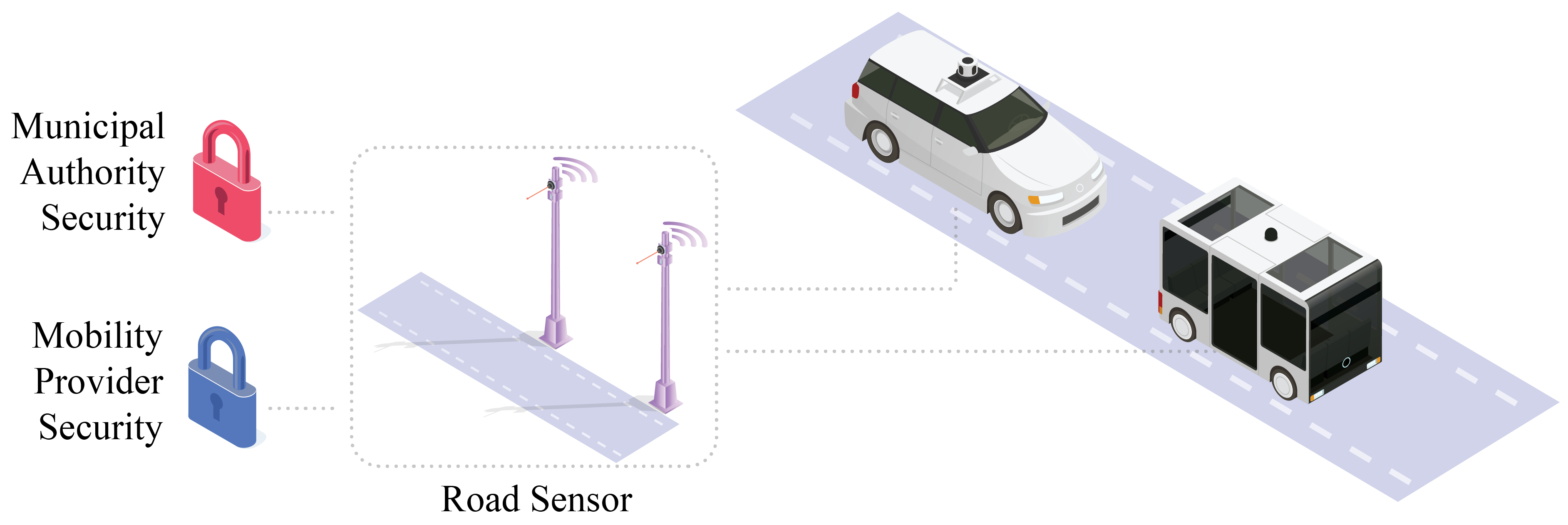}
    \caption{An ARA sensor records the vehicle ID number, vehicle period and current timestamp of each ridehailing vehicle that traverses the road. The dashed line around the sensor represents a communication restriction: The sensor data can only be accessed with the consent of both parties.}
    \label{fig:ARA}
\end{figure}


\section{Discussion}\label{sec:discussion}


The protocol requires minimal computational resources from the MA. Indeed, the computation of $g(\Lambda)$, and all data analysis therein, is conducted by the MPs. The MA only needs to construct an evaluation circuit $C$ and zk-SNARK $(S,V,P)$ for each of their queries $g$. 
In terms of data storage, the MA only needs to store the commitments $\sigma$ to the demand and the total recorded volume of MP traffic $\phi$ for each data collecting period. If the Merkle Trees are built using the SHA256 hash function, then $\sigma$ is only 256 bits, and is thus easy to store. $\phi$ is a single integer, which is also easy to store.  

On the other hand, the hardware requirements for the Aggregated Roadside Audits may be difficult for cities to implement, as placing a sensor on every road in the city will be expensive. To address this concern, we present an alternative mechanism known as Randomized Roadside Audits (RRA) in Supplementary Material~\smvi{} \ifarXiv \else of the extended version \cite{TsaoYangZoepfPavoneE2021} \fi. RRA is able to use fewer sensors by randomly sampling the roads to be audited, however as a tradeoff for using fewer sensors, overreported demand will only be detected probabilistically. See Supplementary Material~\smvi{} \ifarXiv \else of the extended version \cite{TsaoYangZoepfPavoneE2021} \fi for more details. 

There is a trade-off between privacy and diagnosis when using zero knowledge proofs. In the event that the zk-SNARK's verification function fails, i.e., $V(\sigma, z, \pi) = \texttt{False}$, we know that $z$ is not a valid message, but we do not know \textit{why} it is invalid. Specifically, $V(\sigma, z,\pi)$ does not specify which step of the evaluation algorithm $C$ failed (See Stage 3 of Section \ref{sec:protocol:description}). Thus in order to determine whether the failure was due to integrity checks, inconsistency between $\Lambda$ and $\sigma$, or a mistake in the computation of $g$, further investigation would be required. Thus, while the zero knowledge proof enables us to check the correctness of $z$ without directly inspecting the data, it does not provide any diagnosis in the event that $z$ is invalid. 

\ifarXiv
Multi-party computation is a natural way to generalize the proposed protocol to the multiple MP setting. In such a case, the demand $\Lambda = \cup_{i=1}^k \Lambda_k$ is the disjoint union of $\Lambda_1,...,\Lambda_k$, where $\Lambda_i$ is the demand served by the $i$th MP, and is hence the private data of the $i$th MP. Multi-party computation is a procedure by which several players can compute a function over their combined data without any player learning the private data of other players. In the context of PMM with multiple MPs, the MPs are the players and their private data is the $\Lambda_i$'s. In stage 0, each MP would send to MA a commitment to its demand data, and the computation of $z$ and $\pi$ in stages 4 and 5 would be done using secure multi-party computation. Verifiability is established using Rider Witness and ARA, as is done in the single MP case. See \cite{LapetsJAIQVB18} and \url{multiparty.org} for an open-source implementation of multi-party computation. 
\fi 

\section{Conclusion}\label{sec:conclusion}

In this paper we presented an interactive protocol that enables a Municipal Authority to obtain insights from the data of Mobility Providers in a verifiable and privacy-preserving way. During the protocol, a Municipal Authority submits queries and a Mobility Provider computes responses based on its mobility data. The protocol is privacy-preserving in the sense that the Municipal Authority learns nothing about the dataset beyond the answer to its query. The protocol is verifiable in the sense that any deviation from the protocol's instructions by one party can be detected by the other. Verifiability is achieved by using cryptographic commitments and aggregated roadside measurements, and data privacy is achieved using zero knowledge proofs. 
We showed that the protocol can be generalized to a setting with multiple Mobility Providers using secure multi-party computation. We present a differentially private version of the protocol in Appendix~\ref{sec:ext:differentialprivacy} to address situations where the Municipal Authority has many queries. 

There are several interesting and important directions for future work. First, while this work accounts for strategic behavior of the Municipal Authority and Mobility Providers, it assumes that drivers and customers will act honestly. A more general model which also accounts for potential strategic behavior of drivers and customers would be of great value and interest. Second, while secure multi-party computation can be used to generalize the protocol to settings with multiple Mobility Providers, generic tools for secure multi-party computation introduce computational and communication overhead. Developing specialized multi-party computation tools for mobility-related queries is thus of significant practical interest. Finally, we suspect there are other applications for this protocol in transportation research beyond city planning and regulation enforcement that could be investigated. 

\bibliographystyle{ieeetr}
\bibliography{peripherals/anontransit.bib}

\newpage 
\appendix


\section{Incorporating Differential Privacy for the Large Query Regime}\label{sec:ext:differentialprivacy}

One potential concern with the protocol described in Section~\ref{sec:protocol} arises in the large query regime. It was shown in \cite{DinurN03} that a dataset can be reconstructed from many accurate statistical measurements. One way to address this is to set a limit on the number of times the MA can query the data for a given time period. Such a restriction would not lead to data scarcity since the MP is collecting new data daily. Differential privacy offers a principled way to determine how many times MA should query a dataset (see Remark~\ref{rem:priv_budget}). Differentially private mechanisms address the result of \cite{DinurN03} by reducing the accuracy of the responses to queries, i.e., responding to a query $g$ with a noisy version of $g(\Lambda)$. In this section we describe how the protocol from section~\ref{sec:protocol} can be generalized to facilitate verifiable and differentially private responses from MP. To this end we first define differential privacy. 

\begin{definition}[Datasets and Adjacency]
A dataset $\Lambda$ is a set of datapoints. In the context of transportation demand, a datapoint is the metadata corresponding to a single trip. We say two datasets $\Lambda,\Lambda'$ are adjacent if either (a) $\Lambda \subset \Lambda'$ with $\Lambda'$ containing exactly 1 more datapoint than $\Lambda$, or (b) $\Lambda' \subset \Lambda$ with $\Lambda$ containing exactly 1 more datapoint than $\Lambda'$.
\end{definition} 

\begin{definition}[Differential Privacy]
Let $\cF$ be a $\sigma$-algebra on a space $\Omega$. A mechanism $M : \cD \rightarrow \Omega$ is $(\epsilon,\delta)$-differentially private if for any two adjacent datasets $\Lambda,\Lambda' \in \cD$ and any $\cF$-measurable event $S$,
\begin{align*}
    \mathbb{P} \bigpar{ M(\Lambda) \in S } \leq e^\epsilon \mathbb{P} \bigpar{ M(\Lambda') \in S } + \delta.
\end{align*}
\end{definition}
In words, the output of a $(\epsilon,\delta)$-differentially private mechanism on $\Lambda$ is statistically indistinguishable from the output of the mechanism on $\Lambda \cup \bigbrace{\lambda}$ for any single datapoint $\lambda \not\in \Lambda$. Since $\Lambda$ does not contain $\lambda$, $M(\Lambda)$ does not reveal any information about $\lambda$. Since $M(\Lambda \cup \bigbrace{\lambda} )$ is statistically indistinguishable from $M(\Lambda)$, $M(\Lambda \cup \bigbrace{\lambda} )$ does not reveal much about $\lambda$. 

\begin{example}[Laplace Mechanism for Vote Tallying]\label{ex:laplace_voting}
Suppose a city is trying to decide whether to expand its railways or expand its roads based on a majority vote from its citizens. The dataset is $\Lambda := \bigbrace{\lambda_1,...,\lambda_n}$ where $\lambda_i$ is a boolean which is $0$ if the $i$th citizen prefers the railway and $1$ if the $i$th citizen prefers the roads. To implement majority vote, the city needs to compute $g(\Lambda) := \sum_{i=1}^n \lambda_i$. The Laplace Mechanism achieves $(\epsilon,0)$-differential privacy for this computation via
\begin{align*}
    M_{\text{laplace}}(\Lambda) := Y + \sum_{i=1}^n \lambda_i,
\end{align*}
where $Y$ has the discrete Laplace distribution: for any $k \in \mathbb{Z}$, $\mathbb{P}[Y = k] \propto e^{-\epsilon \abs{k}}$. To see why this achieves $(\epsilon,0)$-differential privacy, for any $1 \leq j \leq n$, note that
\begin{align*}
    \frac{\mathbb{P}[M(\Lambda) = k]}{\mathbb{P}[M(\Lambda \setminus \bigbrace{\lambda_j}) = k]} = \frac{e^{-\epsilon \abs{k - \sum_{i=1}^n \lambda_i}}}{e^{-\epsilon \abs{ k - \sum_{i \neq j} \lambda_i }}} \leq e^{\epsilon \lambda_j} \leq e^\epsilon. 
\end{align*}
Note that the noise distribution for $Y$ depends only on $\epsilon$, and is independent of $n$, the size of the dataset. 
\end{example}

\begin{remark}[Privacy Budget]\label{rem:priv_budget}
By composition rules, the result of $k$ queries to a $(\epsilon,0)$-differentially private mechanism is $(k\epsilon,0)$-differentially private. Thus a dataset should only be used to answer $k$ separate $(\epsilon,0)$-differentially private queries if $e^{k\epsilon}$ is sufficiently close to $1$.
\end{remark}

\subsection{Goal: Differential Privacy without Trust}
Given a query function $g$ from MA, let $M$ be an polynomial-time computable $(\epsilon,\delta)$-differentially private mechanism for computing $g$. For a given dataset $\Lambda$ we can represent the random variable $M(\Lambda)$ with a function $\widetilde{g}(\Lambda,Z)$ where $Z \in \bigbrace{0,1}^{v}$ represents the random bits used by $M$. Here $v$ is an upper bound on the number of random bits needed for the computation of $M$. By its construction, $\widetilde{g}(\Lambda,Z)$ is $(\epsilon,\delta)$-differentially private if $Z$ is drawn uniformly at random over $\bigbrace{0,1}^v$. Therefore differential privacy is achieved if MP draws $Z$ uniformly at random over $\bigbrace{0,1}^v$ and sends $\widetilde{g}(\Lambda,Z)$ to MA. However, as mentioned in Assumption~\ref{assump:honest}, we are studying a model where MP can act strategically. Thus we cannot assume that MP will sample $Z$ uniformly at random if there is some other distribution over $Z$ that leads to a more favorable outcome for MP. We revisit Example~\ref{ex:laplace_voting} to illustrate this concern. 


\begin{example}[Dishonest Vote Tallying]
Consider the setting from Example~\ref{ex:laplace_voting}. The Laplace mechanism can be represented as
\begin{align*}
    \widetilde{g}(\Lambda,Z) := Y + \sum_{i=1}^n \lambda_i, \text{ where } Y = F_{\text{laplace}}^{-1} \bigpar{ \frac{\text{int}(Z)}{2^v} },
\end{align*}
where $\text{int}(Z)$ is the integer whose binary representation is the bits of $Z$. Here $F_{\text{laplace}}^{-1}$ is the inverse cumulative distribution for the discrete Laplace distribution. Thus $F_{\text{laplace}}^{-1}(\text{int}(Z)/2^v)$ is an application of inverse transform sampling that converts a uniform random variable $Z$ into a random variable $Y$ with a discrete Laplace distribution. Suppose the MP has a ridehailing service and would thus prefer an upgrade to city roads over an upgrade to the railway system. If this is the case, choosing $Z$ so that $\widetilde{g}(\Lambda, Z) > n/2$ (as opposed to choosing $Z$ randomly) is a weakly dominant strategy for MP, even if $g(\Lambda) < n/2$ and a majority of the citizens prefer railway upgrades. 
\end{example}

Thus we need a way to verify that the randomness $Z$ used in MP's evaluation of $g(\Lambda,Z)$ has the correct distribution. We will now show how the protocol can be adjusted to accommodate this, and as a consequence, enable verifiable differentially private data queries for MA.

\begin{remark}[MA provided randomness]\label{rem:MA_randomness}
One natural attempt to ensure that $Z$ is uniformly random is to have MA specify $Z$. However, this destroys the differential privacy, since for some mechanisms (including the Laplace mechanism) $g(\Lambda)$ can be computed from $\widetilde{g}(\Lambda,Z)$ and $Z$. Also, it is not clear a priori whether such a setup is strategyproof for MA. 
\end{remark}

\subsection{A Differentially Private version of the protocol}

In this section, we present modifications to the protocol from Section~\ref{sec:protocol:description} that enables verifiable differentially private responses from MP. At a high level, the MA and MP jointly determine the random bits $Z$ via a coin flipping protocol \cite{Blum82}. The zk-SNARK can then be modified to ensure that $\widetilde{g}(\Lambda,Z)$ is computed correctly. The protocol has a total of 6 stages which are described below. \\

\noindent \textit{Stage 0: (Data Collection)} MP builds a Merkle Tree $T_\Lambda$ of the demand $\Lambda$ that it serves. It computes a commitment $\sigma := \textsf{MCommit}(\Lambda,r)$ to this demand. Additionally, MP samples $Z_{\text{mp}}$ uniformly at random from $\bigbrace{0,1}^v$ and computes a Pedersen commitment \cite{Pedersen91} $z_{\text{mp}} := \textsf{Commit}(Z_{\text{mp}},r_{\text{mp}})$. The Pederson commitment scheme is a secure commitment scheme which is perfectly hiding and computationally binding. MP sends both $\sigma, z_{\text{mp}}$ to MA. \\

\noindent \textit{Stage 1: (Integrity Checks)} Same as in Section~\ref{sec:protocol:description}. \\

\noindent \textit{Stage 2: (Message Specifications)} MA specifies the function $g$ it wants to compute. Additionally, MA samples $Z_{\text{ma}}$ uniformly at random from $\bigbrace{0,1}^v$ and specifies a differentially private mechanism $\widetilde{g}$ for the computation of $g$. \\

\noindent \textit{Stage 3: (zk-SNARK Construction)} MA constructs an evaluation circuit $C$ for the function $\widetilde{g}$. The public parameters of $C$ are $\sigma, z_{\text{mp}}, Z_{\text{ma}}, z$ and the input to $C$ is a witness of the form $w = (\Lambda_w, r_w, c_w, Z_{\text{mp},w}, r_{\text{mp},w})$. $C$ does the following:
\begin{enumerate}
    \item Checks whether the Rider Witness and Aggregated Roadside Audit tests are satisfied,
    \item Checks whether $\textsf{MCommit}(\Lambda_w, r_w) = \sigma$,
    \item Checks whether $\textsf{Commit}(Z_{\text{mp},w}, r_{\text{mp},w}) = z_{\text{mp}}$,
    \item Checks whether $\widetilde{g}(\Lambda_w, Z_{\text{ma}} \oplus Z_{\text{mp},w}) = z$. (Here $\oplus$ is bit-wise XOR.) 
\end{enumerate}
$C$ will return $\texttt{True}$ if and only if all of these checks pass. MA constructs a zk-SNARK $(S,V,P)$ for $C$ and sends $g,\widetilde{g},Z_{\text{ma}}, C, (S,V,P)$ to MP. \\

\noindent \textit{Stage 4: (Function Evaluation)} If $\widetilde{g}$ is a differentially private mechanism for computing $g$, then MP computes a message $z = \widetilde{g}(\Lambda, Z_{\text{ma}} \oplus Z_{\text{mp}})$ and a witness $w := (\Lambda,r, c_w, Z_{\text{mp}}, r_{\text{mp}})$ to the correctness of $z$. \\

\noindent \textit{Stage 5: (Creating a Zero Knowledge Proof)} Same as in Section~\ref{sec:protocol:description}. \\

\noindent \textit{Stage 6: (zk-SNARK Verification)} Same as in Section~\ref{sec:protocol:description}. \\

\noindent In Supplementary Material~\smvii{} \ifarXiv \else of the extended version \cite{TsaoYangZoepfPavoneE2021} \fi we show that this protocol has the following two desirable features that enable verifiable and differentially private responses from MP to MA queries. 
\begin{enumerate}
    \item Verifiability - If the MA receives a valid proof from MP, then it can be sure that the corresponding message is indeed $\widetilde{g}(\Lambda, Z_{\text{ma}} \oplus Z_{\text{mp}})$. 
    \item Differential Privacy - The MP's output is differentially private with respect to the dataset $\Lambda$ if at least one of $Z_{\text{ma}},Z_{\text{mp}}$ is sampled uniformly at random.
\end{enumerate}

\begin{remark}[A note on Local Differential Privacy]\label{rem:LDP}
Local Differential Privacy \cite{KasiviswanathanLNRS11} addresses the setting where the data collector is untrusted. Differential privacy is achieved by users adding noise to their data before sending it to the data collector. This is in contrast to the setting we study here where an untrusted data collector has the clean data of many users. We chose to study the latter model due to the way current mobility companies collect high resolution data on the trips they serve. Additionally, local differential privacy requires users to add noise to their data so they become statistically indistinguishable from one another. In the context of transportation, this means the noisy data of users will be statistically indistinguishable from one another, even if they have very different travel preferences. This level of noise significantly reduces the accuracy of any computation done on the data. 
\end{remark}

\ifarXiv
\newpage 

\section{Supplementary Material}

\subsection{Mobility Provider Serving Demand}\label{app:netflow:tv}

For a given discretization of time $\cT := \bigbrace{0, \Delta t, 2\Delta t, ..., T \Delta t}$, the demand $\Lambda \in \mathbb{R}^{n \times n \times T}$ can be represented as a 3-dimensional matrix (e.g., a 3-Tensor) where $\Lambda(i,j,t)$ represents the number of riders who request transit from $i$ to $j$ at time $t$. We use $\tau_{ij}$ to represent the time it takes to travel from $i$ to $j$. 

To serve the demand from $i$ to $j$, the MP chooses passenger carrying flows $x^{ij} \in \mathbb{R}_+^{mT}$ where $x^{ij}_t(u,v)$ is the number of passenger carrying trips from $i$ to $j$ that enter the road $(u,v)$ at time $t$. Such vehicles will exit the road at time $t + \tau_{uv}$. There is also a rebalancing flow $r \in \mathbb{R}^{mT}$ which represents the movement of vacant vehicles that are re-positioning themselves to better align with future demand. Concretely, $r_t(u,v)$ is the number of vacant vehicles which enter road $(u,v)$ at time $t$. The initial condition is $y \in \mathbb{R}_+^n$, where $y_i$ denotes the number of vehicles at location $i$ at time $0$.

The Mobility Provider's routing strategy is thus $x := \bigpar{ \bigbrace{x^{ij}}_{(i,j) \in V \times V}, r}$ which satisfies the following multi-commodity network flow constraints:

\begin{align}
    \sum_{v : (v,u) \in E} \bigpar{ r_{t-\tau_{vu}}(v,u) + \sum_{(i,j) \in V \times V} x^{ij}_{t-\tau_{vu}}(v,u)} = \sum_{v : (u,v) \in E} \bigpar{r_t(u,v) + \sum_{(i,j) \in V \times V} x^{ij}_t(u,v)}& \label{eqn:netflow:tv:conservation:all}\\ \text{ for all } (u,t) \in V \times [T]& \nonumber 
\end{align}
\begin{align}
    \sum_{v : (u,v) \in E} x^{ij}_t(u,v) = \sum_{v : (v,u) \in E} x^{ij}_{t-\tau_{vu}}(v,u) \text{ for all } (i,j) \in V \times V, t \in [T], u \not\in \bigbrace{i,j}\label{eqn:netflow:tv:conservation:route}
\end{align}
\begin{align}
    \sum_{\tau=0}^t \bigpar{ \sum_{v : (i,v) \in E} x^{ij}_\tau(i,v) - \sum_{v : (v,i) \in E} x^{ij}_{\tau-\tau_{vi}}(v,i)} \leq \sum_{\tau=0}^t \Lambda(i,j,\tau) \text{ for all } (i,j,t) \in V \times V \times [T]\label{eqn:netflow:tv:conservation:source}
\end{align}
\begin{align}
    x^{ij}_t(j,v) = 0 \text{ for all } (i,j) \in V \times V, t \in [T], (j,v) \in E. \label{eqn:netflow:tv:conservation:dest}
\end{align}
\begin{align}
    \sum_{j : (i,j) \in E} x_0^{ij} = y_i \text{ for all } i \in V. \label{eqn:netflow:tv:conservation:initial}
\end{align}

Here \eqref{eqn:netflow:tv:conservation:all} represents conservation of vehicles, \eqref{eqn:netflow:tv:conservation:route},\eqref{eqn:netflow:tv:conservation:source},\eqref{eqn:netflow:tv:conservation:dest} enforce pickup and dropoff constraints according to the demand $\Lambda$, and \eqref{eqn:netflow:tv:conservation:initial} enforces initial conditions. 

The utility received by the Mobility provider (e.g., total revenue) from implementing flow $x$ for a given demand $\Lambda$ is $J_{\text{MP}}(x;\Lambda)$. An optimal routing algorithm for demand $\Lambda$ is a solution to the following optimization problem.
\begin{align*}
    \underset{x}{\text{maximize }} & J_{\text{MP}}(x;\Lambda) \\
    \text{s.t. } & \eqref{eqn:netflow:tv:conservation:all},\eqref{eqn:netflow:tv:conservation:route},\eqref{eqn:netflow:tv:conservation:source},\eqref{eqn:netflow:tv:conservation:dest},\eqref{eqn:netflow:tv:conservation:initial}.
\end{align*}

\subsection{Mobility Provider Serving Demand (Steady State)}\label{app:netflow:ss}

In a steady state model, the demand can be represented as $\Lambda \in \mathbb{R}_+^{n \times n}$, a matrix where $\Lambda(i,j)$ represents the rate at which riders request transit from node $i$ to node $j$.

For each origin-destination pair $(i,j) \in V \times V$, the MP serves the demand $\Lambda(i,j)$ by choosing a passenger carrying flow $x^{ij}_p \in \mathbb{R}_+^{m}$ and a rebalancing flow $x_r \in \mathbb{R}_+^m$ so that $x := \bigpar{\bigbrace{x_p^{ij}}_{(i,j) \in V \times V}, x_r}$ satisfy the multi-commodity network flow constraints with demand $\Lambda$: 

\begin{align}
    &\sum_{v: (u,v) \in E} \bigpar{ x_r(u,v) + \sum_{(i,j) \in V \times V} x^{ij}_p (u,v)} = \sum_{v: (v,u) \in E} \bigpar{ x_r(v,u) + \sum_{(v,u) \in V \times V} x^{ij}_p (v,u)} \text{ for all } u \in V \label{eqn:netflow:ss:conservation:all} \\
    &\Lambda(i,j) \mathds{1}_{[u = j]} + \sum_{v: (u,v) \in E} x^{ij}_p (u,v) = \Lambda(i,j) \mathds{1}_{[u = i]} + \sum_{v: (v,u) \in E} x_r(v,u) + x^{ij}_p (v,u) \text{ for all } (i,j) \in V \times V, u \in V\label{eqn:netflow:ss:conservation:pair}.
\end{align}
Here \eqref{eqn:netflow:ss:conservation:all} represents conservation of flow and \eqref{eqn:netflow:ss:conservation:pair} enforces pickup and dropoff constraints according to the demand $\Lambda$. 

The utility received by the Mobility provider (e.g., total revenue) from implementing flow $x$ is $J_{\text{MP}}(x)$. Therefore the Mobility Provider will choose $x$ according to the following program. 
\begin{align*}
    \underset{x}{\text{maximize }} & J_{\text{MP}}(x) \\
    \text{s.t. } & \eqref{eqn:netflow:ss:conservation:all}, \eqref{eqn:netflow:ss:conservation:pair}
\end{align*}

\subsection{Cryptographic Tools}\label{sec:crypto}

In this section we introduce existing cryptographic tools that are used in the protocol. The contents of this section are discussed in greater detail in \cite{NarayananEtAl16,BonehShoup20, GoldwasserMR89,SassonEtAl14,GabizonWC19}. Throughout this paper, we use $r||x$ to denote the concatenation of $r$ and $x$. 
\subsubsection{Cryptographic Hash Functions}

\begin{definition}[Cryptographic Hash Functions]
A function $H$ is a $d$-bit cryptographic hash function if it is a mapping from binary strings of arbitrary length to $\bigbrace{0,1}^d$ and has the following properties:
\begin{enumerate}
    \item It is deterministic.
    \item It is efficient to compute. 
    \item $H$ is \textit{collision resistant} - For sufficiently large $d$, it is computationally intractable to find distinct inputs $x_1,x_2$ so that $H(x_1) = H(x_2)$.
    \item $H$ is \textit{hiding} - If $r$ is a sufficiently long random string (256 bits is often sufficient), then it is computationally intractable to deduce anything about $x$ by observing $H(r||x)$. 
\end{enumerate}
\end{definition}

Property 3 is called collision resistance and enables the hash function to be used as a digital fingerprint. Indeed, since it is unlikely that two files will have the same hash value, $H(x)$ can serve as a unique identifier for $x$. We refer the interested reader to \cite{NarayananEtAl16} for further details on cryptographic hash functions.

\href{https://web.archive.org/web/20160330153520/http://www.staff.science.uu.nl/~werkh108/docs/study/Y5_07_08/infocry/project/Cryp08.pdf}{SHA256} is a widely used collision resistant hash function which has extensive applications including but not limited to establishing secure communication channels, computing checksums for online downloads, and computing proof-of-work in Bitcoin. 

\subsubsection{Cryptographic Commitments}

Cryptographic commitment schemes are tamper-proof communication protocols between two parties: a sender and a receiver. In a commitment scheme, the sender chooses (i.e., commits to) a message. At a later time, the sender reveals the message to the receiver, and the receiver can be sure that the message it received is the same as the original message chosen by the sender.

\textit{Intuition -} We can think of a commitment scheme as follows: A sender places a message into a box and locks the box with a key. The sender then gives the locked box to the receiver. Once the sender has given the box away, the sender can no longer change the message inside the box. At this point, the receiver, who does not have the key, cannot open the box to read the message. At a later time, the sender can give the key to the receiver, allowing the receiver to read the message. 

A commitment scheme is specified by a message space $\cM$, nonce space $\cR$, commitment space $\cX$, a commitment function $\textsf{commit} : \cM \times \cR \rightarrow \cX$ and a verification function $\textsf{verify}: \cM \times \cR \times \cX \rightarrow \bigbrace{0,1}$. Creating a commitment to a message $m \in \cM$ happens in two steps:

\begin{enumerate}
    \item \textit{Commitment Step -} The sender computes $\sigma := \textsf{commit}(m,r)$ for some $r \in \cR$ and gives $\sigma$ to the receiver.
    \item \textit{Reveal Step -} At some later time, the sender gives $m,r$ to the receiver who accepts $m$ as the original message if and only if $\textsf{verify}(m,r,\sigma) := \mathds{1}_{[\textsf{commit}(m,r) = \sigma]}$ evaluates to $1$. 
\end{enumerate}

\noindent A secure commitment scheme has two important properties:

\begin{enumerate}
    \item \textit{Binding -} If $\sigma$ is a commitment to a value $m$, it is computationally intractable to find $m',r'$ so that $m' \neq m$ and $\textsf{commit}(m',r') = \sigma$. Hence $\sigma$ binds the committer to the value $m$. 
    \item \textit{Hiding -} It is computationally intractable to learn anything about $m$ from $\sigma$. 
\end{enumerate}

Cryptographic hash functions can be used to build secure commitment schemes. To do this, given a cryptographic hash function $H$, we define 
\begin{align*}
    \textsf{commit}(m,r) := H(r||m) \text{ and } \textsf{verify}(m,r,\sigma) := \mathds{1}_{[H(r||m) = \sigma]}.
\end{align*}

The security of this commitment scheme comes from the properties of $H$. The binding property of this commitment scheme follows directly from collision resistance of $H$. Furthermore, if $r$ is chosen uniformly at random from $\cR$, then the commitment scheme is hiding due to the hiding property of $H$. 

\subsubsection{Merkle Trees}

A Merkle tree is a data structure that is used to create commitments to a collection of items $M := \bigbrace{m_0,...,m_{q-1}}$. A Merkle Tree has two main features:

\begin{enumerate}
    \item The root of the tree contains a hiding commitment to the entire collection $M$.
    \item The root can also serve as a commitment to each item $m \in M$. Furthermore, the proof that $m$ is a leaf of the tree reveals nothing about the other items in $M$ and has length 
    $O(\log q)$, where $q$ is the total number of leaves in the tree. 
\end{enumerate}

A Merkle tree can be constructed from a cryptographic hash function. Concretely, given a cryptographic hash function $H$ and a collection of items $m_0,...,m_{q-1}$, construct a binary tree with these items as the leaves. 

The leaves of the Merkle Tree are the zeroth level $h_{0,0}, h_{0,1}, ... , h_{0,q-1}$, where $h_{0,i} = m_i$. The next level has the same number of nodes $h_{1,0}, ..., h_{1,q-1}$ defined by $h_{1,i} = H(r_i || m_i)$ where $r_i$ is a random nonce. Level $k$ where $k \geq 2$, has half as many nodes as level $k-1$, defined by $h_{k,i} = H(h_{k-1,2i}||h_{k-1,2i+1})$. Figure~\ref{fig:merkletree} illustrates an example of a Merkle tree. In total there are $\ell_t+1$ levels where $\ell_q := \bigceil{\log_2 q}+1$. With this notation, $h_{\ell_q,0}$ is the value at the root of the Merkle Tree. 

The root of a Merkle tree $h_{\ell_q, 0}$ is a commitment to the entire collection due to collision resistance of $H$. To commit to the data $M$, the committer will generate $r_0,...,r_{q-1}$, compute the Merkle Tree, and announce $h_{\ell_q, 0}$. In the reveal step, the committer can announce $\bigbrace{(m_i,r_i)}_{i=0}^{q-1}$, and anyone can then compute the resulting Merkle tree and confirm that the root is equal to $h_{\ell_q, 0}$. 
\subsubsection{Merkle Proofs}

The root also serves as a commitment to each $m_i \in M$. Suppose someone who knows $m_i$ wants a proof that $m_i$ is a leaf in the Merkle tree. A proof $\pi(m_i)$ can be constructed from the Merkle tree. Furthermore, this proof reveals nothing about the other items $\bigbrace{m_j}_{j \neq i}$.

Define $x_0,x_1,...x_{\ell_q}$ recursively as:
\begin{align*}
    x_0 &:= i \\
    x_j &:= \bigfloor{ \frac{x_{j-1}}{2} } \text{ for } 1 \leq j \leq \ell_q. 
\end{align*}
With this notation, $\bigbrace{h_{j, x_j}}_{j=0}^{\ell_q}$ is the path from $m_i$ to the root of the Merkle Tree. The Merkle proof for $m_i$ is denoted as $\pi(m_i)$ and is given by
\begin{align*}
    \pi(m_i) &:= \bigbrace{r_i} \cup \bigbrace{\text{sibling}(h_{j,x_j})}_{j=1}^{\ell_q}, \text{ where } \\
    \text{sibling}(h_{i,j}) &:= 
    \casewise{
        \begin{tabular}{cc}
            $h_{i,j+1}$ & if $j$ is even, \\
            $h_{i,j-1}$ & if $j$ is odd. 
        \end{tabular}
    }
\end{align*}
See Supplementary Material~\smix{} for details on the binding and hiding properties of Merkle commitments, and how to verify the correctness of Merkle proofs. 

\begin{definition}[Merkle Commitment]
Given a data set $M = \bigbrace{m_1,...,m_t}$ and a set of random nonce values $r = \bigbrace{r_1,...,r_t}$, we use $\textsf{MCommit}(M,r)$ to denote the root of the Merkle Tree constructed from the data $M$ and random nonces $r$. 
\end{definition}

We refer the interested reader to Section 8.9 of \cite{BonehShoup20} for more details on Merkle Trees. 

\begin{figure}
    \centering
    \includegraphics[width = \textwidth]{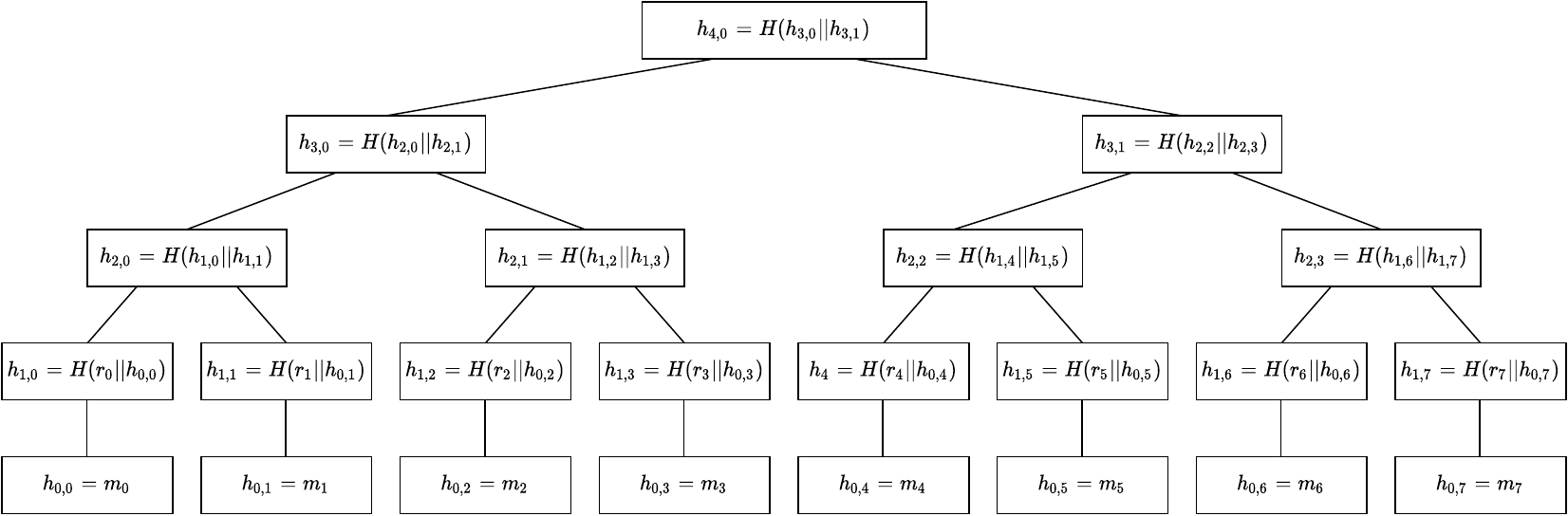}
    \caption{An example of a Merkle tree containing 8 items. Each item $m_i$ is a leaf node and has one parent which is $H(r_i||m_i)$, where $r_i$ is a random hiding nonce. All other internal nodes are computed by applying $H$ to the concatenation of its children.}
    \label{fig:merkletree}
\end{figure}

\subsubsection{Digital Signatures}

A digital signature scheme is comprised by three functions: $\textsf{Gen}, \textsf{sign}, \textsf{verify}$. $\textsf{Gen}()$ is a random function that produces valid public and private key pairs $(\textsf{pk}, \textsf{sk})$. Given a message, a signature is produced using the secret key via $\sigma = \textsf{sign}(\textsf{sk}, m)$. The authenticity of the signature is checked using the public key via $\textsf{verify}(\textsf{pk}, m, \sigma)$. A secure digital signature scheme has two properties:
\begin{enumerate}
    \item \textit{Correctness -} For a valid key pair $(\textsf{pk}, \textsf{sk})$ obtained from $\textsf{Gen}()$ and any message $m$, we have $\textsf{verify}(\textsf{pk}, m, \textsf{sign}(\textsf{sk},m)) = \texttt{True}$. 
    \item \textit{Secure -} Given a public key $\textsf{pk}$, if the corresponding secret key $\textsf{sk}$ is unknown, then it is computationally intractable to forge a signature on any message. Specifically, if $\textsf{sk}$ has never been used to sign a message $m'$, then without knowledge of $\textsf{sk}$, it is computationally intractable to find $(m',\sigma')$ so that $\textsf{verify}(\textsf{pk}, m', \sigma') = \texttt{True}$. 
\end{enumerate}

\noindent We refer the interested reader to Section 13 of \cite{BonehShoup20} for more details on digital signatures.

\subsubsection{Public Key Encryption}

A public key encryption scheme is specified by three functions: a key generation function, an encryption function $E$, and a decryption function $D$. In a public key encryption scheme, each user has a public key and private key denoted $(\textsf{pk},\textsf{sk})$, produced by the key generation function. As the name suggests, the public key $\textsf{pk}$ is known to everyone, while each secret key $\textsf{sk}$ is known only by its owner. Encryption is done using public keys, and decryption is done using secret keys. To send a message $m$ to Bob, one would encrypt $m$ using Bob's public key via $c = E(\textsf{pk}_{\text{Bob}}, m)$. Then Bob would decrypt the message via $D(\textsf{sk}_{\text{Bob}}, c)$. A secure Public Key Encryption scheme has two properties:

\begin{enumerate}
    \item \textit{Correctness -} For every valid key pair $(\textsf{pk}, \textsf{sk})$ and any message $m$, we have $m = D(\textsf{sk}, E(\textsf{pk},m))$, i.e., the intended recipient receives the correct message upon decryption.
    \item \textit{Secure -} For a public key $\textsf{pk}$ and any message $m$, if the corresponding secret key $\textsf{sk}$ is not known, then it is computationally intractable to deduce anything about $m$ from the ciphertext $E(\textsf{pk}, m)$.
\end{enumerate}

The appeal of public key encryption is that users do not have to have a shared common key in order to send encrypted messages to one another. We refer the interested reader to Part II of \cite{BonehShoup20} for more details on Public Key Encryption.

\subsubsection{Zero Knowledge Proofs}

A zero knowledge proof for a mathematical problem is a technique whereby one party (the prover) can convince another party (the verifier) that it knows a solution $w$ to the problem without revealing any information about $w$ other than the fact that it is a solution. Before discussing zero knowledge proofs further, we must first introduce proof systems.

\begin{definition}[Proof System]
Consider an arithmetic circuit $C : \mathcal{X} \times \mathcal{W} \rightarrow \bigbrace{0,1}$, and the following optimization problem: For a fixed $x \in \mathcal{X}$, find a $w \in \mathcal{W}$ so that $C(x,w) = 0$. Here $x$ is part of the problem statement, and $w$ is a solution candidate. Consider a tuple of functions $(S,V,P)$ where
\begin{enumerate}
    \item $S$ is a \textit{preprocessing function} that takes as input $C,x$ and outputs public parameters $\texttt{pp}$. 
    \item $P$ is a \textit{prover function} that takes as input $\texttt{pp},x,w$ and produces a proof $\pi$.
    \item $V$ is a \textit{verification function} that takes as input $\texttt{pp},x,\pi$ and outputs either $0$ or $1$ corresponding to whether the proof $\pi$ is invalid or valid respectively. 
\end{enumerate}
The tuple $(S,V,P)$ is a proof system for $C$ if it satisfies the following properties:
\begin{enumerate}
    \item \textit{Completeness -} If $C(x,w) = 0$, then $V(\texttt{pp}, x, P(\texttt{pp}, x, w))$ should evaluate to $1$; i.e., the verifier should accept proofs constructed from valid solutions $w$.
    \item \textit{Proof of Knowledge -} If $V(\texttt{pp},x,\pi) = 1$, then whoever constructed $\pi$ must have known a $w$ satisfying $C(x,w) = 0$.
\end{enumerate}
\end{definition}

With this definition in hand, we can now define zero knowledge proof systems. 

\begin{definition}[Zero Knowledge Proof Systems]
Consider a proof system $(S,V,P)$ for the problem of finding $w$ so that $C(x,w) = 0$. $(S,V,P)$ is a zero knowledge proof system if it is computationally intractable to learn anything about $w$ from $\pi := P(\texttt{pp}, x, w)$. If this is the case, then $\pi$ is a zero knowledge proof. 
\end{definition}

Zero knowledge proofs were first proposed by \cite{GoldwasserMR89}, but the prover and verifier functions were not optimized to be computationally efficient. In the next section, we present zk-SNARKs, which are computationally efficient zero knowledge proof systems. 

\subsubsection{zk-SNARKs}

In this section we introduce Succinct Non-interactive Arguments of Knowledge (SNARK). SNARKs are proof systems where proofs are short, and both the construction and verification of proofs are computationally efficient. 

\begin{definition}[Succinct Non-interactive Argument of Knowledge (SNARK)]
Consider the problem of finding $w \in \cW$ so that $C(x,w) = 0$, where $C$ is an arithmetic circuit with $n$ logic gates. A proof system $(S,V,P)$ is a SNARK if
\begin{enumerate}
    \item The runtime of the prover $P$ is $\widetilde{O}(n)$,
    \item The length of a proof computed by $P$ is $O(\log n)$,
    \item The runtime of the verifier $V$ is $O(\log n)$. 
\end{enumerate}
\end{definition}

\begin{definition}[zk-SNARK]
If a SNARK $(S,V,P)$ is also a zero knowledge proof system, then it is a zk-SNARK. 
\end{definition}

The Zcash cryptocurrency, which provides fully confidential transactions, was the first setting where zk-SNARKs have been used in the field \cite{SassonEtAl14}. zk-SNARKs have also been deployed in the zk-rollup procedure which increases the transaction throughput of the Ethereum blockchain \cite{Buterin16}. 

For PMM we will need a zk-SNARK that does not require a trusted setup. PLONK \cite{GabizonWC19}, Sonic \cite{MallerBKM19}, and Marlin \cite{ChiesaHMMVW20} using a DARK based polynomial commitment scheme described in \cite{BunzFS20, BlockHRRS21}. Other options include Bulletproofs \cite{BunzBBPWM18} and Spartan \cite{Setty20}.

\subsection{Implementation Details and Examples}\label{app:examples}

In this section, we show how driver period information in ridehailing services, and a mobility provider's impact on congestion can be obtained from the protocol. Both cases involve specifying characteristics of the query function $g$ and trip metadata that enable the desired information to be computed by the protocol. 

\subsubsection{Obtaining ridehailing period activity}\label{app:phase}

As discussed in Example~\ref{regulation:period}, the pay rate of ridehailing drivers depends on the period they are in. Ridehailing companies use period 2 to tell users that they are matched and a driver is en route, thereby reducing the likelihood that the user leaves the system out of impatience. Due to this utility, period 2 has a higher pay rate than period 1. There is thus a financial incentive for ridehailing companies to report period 2 activity as period 1 activity so that they can have improved user retention while keeping operations costs low. Accurate period information is thus important to protect the wages of ridehailing drivers. 

We achieve accurate period information by including digital signatures in the trip metadata. Recall that the trip metadata includes the request time, match time, pickup time, and dropoff time of the request. The period 2 and period 3 activity associated with a trip can be deduced from these timestamps, as shown in Figure~\ref{fig:timestamps}. Furthermore, Rider Witness and ARA ensure that reporting the true demand is a dominant strategy for the ridehailing operator. 

\begin{figure}
    \centering
    \includegraphics[width = \textwidth]{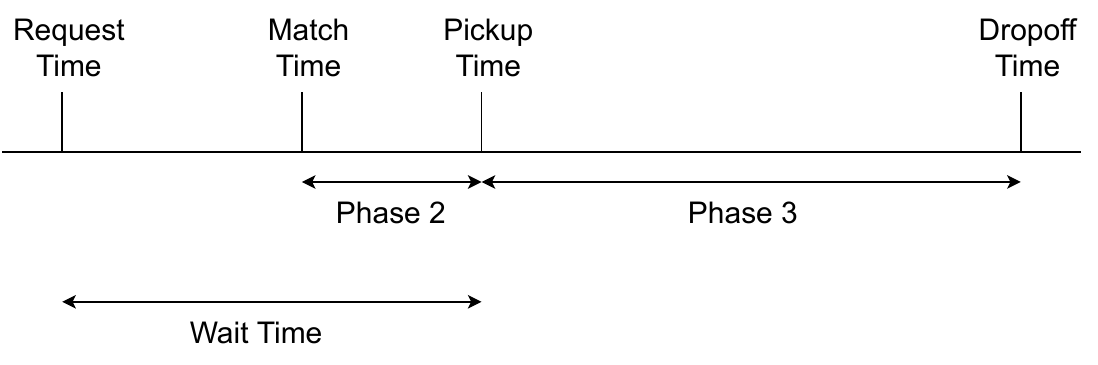}
    \caption{The timesteps within the trip metadata determine the Period 2 and Period 3 activity of the vehicle that serves this trip.}
    \label{fig:timestamps}
\end{figure}

Therefore to ensure accurate period information, it is sufficient to ensure that the aforementioned timesteps are recorded correctly. For period 2 accuracy, we need to ensure that the match time and pickup time are recorded properly for each trip. To do this, we will use digital signatures. To notify a user that they have been matched, the ridehailing operator will send $(m_{pt}, \sigma_{pt})$, where:
\begin{align*}
    m_{pt} &= \text{You have been matched to vehicle }\texttt{vehID}\text{ at time }\texttt{currtime}, \\
    \sigma_{pt} &= \textsf{sign}(\textsf{sk}_{\text{mp}}, m_{pt}).
\end{align*}
The user will only consider the message $m_{pt}$ as genuine if it is accompanied by a valid signature $\sigma_{pt}$. Therefore, telling a user they are matched (and thus reducing the likelihood that this user cancels their trip) requires the ridehailing company to provide an irrefutable and unforgeable declaration of the match time in the form of $(m_{pt}, \sigma_{pt})$. The message and signature $(m_{pt}, \sigma_{pt})$ is then included in the trip metadata to certify the trip's match time. The same can be done for the pickup time, and as a result, ensure accurate reporting of all period 2 activity. 

The accuracy of period 3 activity can be ensured by ensuring that pickup time and dropoff time are recorded correctly. 

To implement driver wage inspection through the protocol, the query function $g$ would be
\begin{align*}
    g_{\text{wage}}(\Lambda) := \prod_{\lambda \in \Lambda} \mathds{1}[w(\lambda) = f_{\text{wage}}(\lambda)],
\end{align*}
where $w(\lambda)$ is the driver wage of ride $\lambda$, and $f_{\text{wage}}$ is the MP's wage formula which may depend on the period and trajectory information contained in the trip metadata of $\lambda$. Note that $g_{\text{wage}}(\Lambda) = 1$ if and only if all drivers were paid properly, and is $0$ otherwise.  

\begin{remark}[Evaluating Waiting Time Equity]
Using the idea from Section~\ref{app:phase}, one can also evaluate the equity of waiting times throughout the network. It is clear from Figure~\ref{fig:timestamps} that the wait time can be determined by the request time and pickup time, both of which can be found in the trip metadata. The trip metadata also includes the pickup location and dropoff location, so the average wait time as a function of pickup location, dropoff location, both pickup and dropoff locations, can all be computed from the trip metadata. 

To implement a waiting time evaluation through the protocol, the Municipal Authority would specify a fairness threshold $\tau$. The query function $g$ is then designed to output $1$ if and only if the average waiting time across locations does not vary by more than the pre-specified threshold, and outputs $0$ otherwise. Concretely, if we want to enforce wait time equity across pickup regions, we could do this with the function

\begin{align*}
    g_{\text{wait}}(\Lambda) = \prod_{i,j \in V} \mathds{1} \bigbra{ \abs{\tau_i - \tau_j} \leq \tau }
\end{align*}
where $\tau_i$ is the average wait time for requests in region $i$. 
\end{remark}

\subsubsection{Evaluating contributions to congestion}

The trip metadata contains the trip trajectory which can be used to evaluate a ridehailing fleet's contribution to congestion. The trip trajectory provides the location of the service vehicle as a function of time, which provides two important insights. First, the trip trajectories can be used to determine how many ridehailing vehicles are on a particular road at any given time. Second, from a trajectory one can compute the amount of time the vehicle spends on each road within the trip path. Thus the average travel time for a road can be calculated, which can then be used to estimate the total traffic flow on the road using traffic models. Combining these two pieces of information, the fraction of a road's total traffic that is ridehailing vehicles can be computed from the trip metadata. 

\subsection{Necessity of Assumption~\ref{assump:honest} for Verifiability}\label{app:assump:honest:justification}
In this section we show that Assumption~\ref{assump:honest} is necessary for verifiable queries on mobility data under the natural assumption that MA does not have surveillance in the interior of MP vehicles. This assumption on limited surveillance ability of MA is formalized in Assumption~\ref{assump:surveil_limit}.
\begin{assumption}\label{assump:surveil_limit}
There does not exist a practical way for MA to determine whether a MP vehicle is carrying a customer or not, without directly tracking all customers. In particular, MA cannot determine the period information of MP vehicles. 
\end{assumption}
Note that MA can obtain phase information from the drivers or from MP, but in the absence of Assumption 1, drivers and MPs may act strategically, and may not be trustworthy. 

The following result shows that under Assumption~\ref{assump:surveil_limit}, if the drivers or riders are willing to collude with the MP, then the MP can misreport properties of its mobility demand in a way that is undetectable by the MA.

\begin{observation}[Necessity of Assumption~\ref{assump:honest} for Strategyproofness]\label{obv:nec_honest}
Under Assumption~\ref{assump:surveil_limit}, the following events are undetectable by MA, even if MA can track all MP vehicles (i.e., knows the location of each MP vehicle at any time):
\begin{enumerate}
    \item If drivers collude with MP, then MP can overreport demand.
    \item If riders and drivers collude with MP, then MP can underreport demand, or misreport attributes of the demand. 
\end{enumerate}
\end{observation}

\begin{proof}[Proof of Observation~\ref{obv:nec_honest}]
By Assumption~\ref{assump:surveil_limit}, MA cannot distinguish between MP vehicles in period 1 and MP vehicles in period 2 or 3. Suppose the drivers are willing to collude with MP. If MP wants to overreport demand from some origin $i$ to some destination $j$, it can have some drivers drive from $i$ to $j$ without a passenger. This will lead to period 1 traffic from $i$ to $j$, however the drivers will report themselves in period 3 to MA. This way, even if MA is able to track the MP vehicles, the reported period information from the drivers will be consistent with the demand report from MP. 

Now suppose both riders and drivers are willing to collude with MP. If the MP wants to underreport demand from $i$ to $j$, they can have some drivers who are serving passengers from $i$ to $j$ report themselves in period 1 to MA. 
\end{proof}

So in the absence of Assumption~\ref{assump:honest}, the MA has no way of checking whether the messages it receives from MP are computed from the true demand. 

\begin{remark}[Tracking Users is also insufficient]
Even if the MA is able to track users and thus determine whether a MP vehicle has a passenger, this still does not prevent overreporting of demand. In this case, MP can hire people to hail rides from specific trips if it wants to overreport demand. 
\end{remark}

\subsection{Roadside Audits with fewer sensors}\label{app:RRA}

In this section we present the Randomized Roadside Audits (RRA) mechanism. Like ARA, the RRA detects overreporting of demand by conducting road audits. Where ARA places sensors on every road, RRA places sensors on a small subset of randomly selected roads, enabling it to use fewer sensors. 

The sensors used in RRA are similar to the sensors used in ARA described in Section~\ref{sec:protocol:mech:ARA:sensors}, with the following differences:

\begin{enumerate}
    \item Each sensor has its own pair of public and secret keys $(\textsf{pk}_s, \textsf{sk}_s)$ for digital signatures. Everyone knows $\textsf{pk}_s$, but $\textsf{sk}_s$ is contained in a Hardware Security Module within the sensor so that it is impossible to extract $\textsf{sk}_s$ from the sensor, but it is still possible to sign messages using $\textsf{sk}_s$. 
    \item Each sensor now records its own location using GPS. 
\end{enumerate}

First, the MA and MP agree on a list of public keys belonging to the sensors. In particular, they must agree on the number of sensors being deployed in the network. Let $mp$ be the number of sensors being deployed (recall that $m$ is the number of roads in the network). We will focus on the case where $p \in (0,1)$. If $p > 1$, then there are enough sensors to implement ARA.

During the data-collection period, the MA will place the sensors inside vehicles which are driven by its employees. We assume that MP cannot determine which vehicles are carrying sensors. In practice, MA can have much more than $mp$ employees driving around in the network, but only $mp$ of them will have sensors. 

The data collection period is divided up into many rounds (e.g., a round could be 1 hour long). In each round, MA will sample a random set of $mp$ roads. Each vehicle with a sensor is assigned to one of these roads, where they will stay (i.e., parked on the side of the street) to measure the MP traffic that pass by them. The sensor will record a measurement $u$ which specifies the time, the period and the location of every MP vehicle that passes by. It will then sign the message with its secret key via $\sigma_u := \textsf{sign}(\textsf{sk}_s, u)$. In particular, a sensor assigned to road $e \in E$ in round $t$ will be able to determine $\varphi_t(e,\Lambda)$, which is the total number of period 2 or period 3 MP vehicles that traverse $e$ in round $t$, formally described below. 
\begin{align*}
    \varphi_t(e,\Lambda) := \sum_{\lambda \in \Lambda} \mathds{1}_{[\lambda \text{ traverses }e \text{ in round }t]}.
\end{align*}

As was the case in ARA, the sensors have a communication constraint that prevents them from transmitting their data unless both MA and MP give permission. Therefore during the data collection period, MP does not know where the sensors are. Once the data collection period is over, both MA and MP give permission to collect the data from the sensors. 

\begin{definition}[RRA Test]
The RRA test checks whether the road usage on sampled roads is consistent with the demand reported by MP. Concretely, a witness $w = (\Lambda_w,r_w,c_w)$ passes the RRA test if $\varphi_t(e,\Lambda) = \varphi_t(e,\Lambda_w)$ for all pairs $(e,t)$ such that $e$ was sampled in round $t$.
\end{definition}

\begin{observation}[Efficacy of Random Roadside Audits]\label{obv:RRA}
Under Assumption~\ref{assump:honest}, if MP submits a commitment $\sigma' = \textsf{MCommit}(\Lambda',r)$ to a strict superset of the demand, i.e., $\Lambda \subset \Lambda'$, then with probability at least $p$, any proof submitted by MP will either be inconsistent with $\sigma'$ or will fail the RRA test. Hence overreporting of demand will be detected with positive probability. 
\end{observation}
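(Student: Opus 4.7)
The plan is to mirror the structure of the proof of Observation~\ref{obv:ARA}, replacing the deterministic argument used for ARA by a probabilistic one that exploits the uniform random sampling in RRA. So the first step is to invoke collision resistance of $\text{MCommit}$ to argue that, up to computational hardness, the only witness $w = (\Lambda_w, r_w, c_w)$ which is consistent with $\sigma' = \text{MCommit}(\Lambda', r)$ must have $\Lambda_w = \Lambda'$. Any other choice produces a witness inconsistent with $\sigma'$, and in particular fails condition 2 of the evaluation circuit $C$ from Stage 3 of Section~\ref{sec:protocol:description}.

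Next, I would use the assumption $\Lambda \subset \Lambda'$ to pick a fictitious trip $\lambda^\star \in \Lambda' \setminus \Lambda$. Because the metadata of $\lambda^\star$ includes a trajectory, there exists at least one edge $e^\star$ that $\lambda^\star$ is claimed to traverse in some specific round $t^\star$. By construction this yields the strict inequality $\varphi_{t^\star}(e^\star, \Lambda) < \varphi_{t^\star}(e^\star, \Lambda')$, since the real demand $\Lambda$ does not contain $\lambda^\star$ while the reported demand $\Lambda'$ does. Therefore, \emph{if} a sensor happens to be placed on $e^\star$ during round $t^\star$, the measurement $\varphi_{t^\star}(e^\star, \Lambda)$ and the value $\varphi_{t^\star}(e^\star, \Lambda_w) = \varphi_{t^\star}(e^\star, \Lambda')$ computed from the witness will disagree, so the RRA test fails.

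The third step is the probabilistic part. In each round, MA selects a set of $mp$ roads uniformly at random out of $m$ roads, so the marginal probability that the particular edge $e^\star$ is sampled in round $t^\star$ is exactly $mp/m = p$. Combining this with the previous step, the probability that MP is caught (either through failing condition 2 of $C$ or through failing the RRA test) is at least $p$, which yields the claimed bound.

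The main technical nuance, rather than a serious obstacle, is to be careful that the $p$ lower bound holds even when the auditor's sampling in different rounds is correlated, and that the trajectory of $\lambda^\star$ unambiguously pins down a $(e^\star, t^\star)$ pair so that the event ``$e^\star$ sampled in round $t^\star$'' has probability exactly $p$. Since we only need a lower bound, it suffices to fix any single $(e^\star, t^\star)$ on the declared trajectory of $\lambda^\star$; extra fictitious trips or multiple traversed edges can only improve the detection probability, and the bound $p$ follows from uniform sampling within that single round.
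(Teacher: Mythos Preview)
Your proposal is correct and follows essentially the same approach as the paper: pick a fictitious trip $\lambda^\star \in \Lambda' \setminus \Lambda$, identify an edge $e^\star$ and round $t^\star$ on its trajectory where $\varphi_{t^\star}(e^\star,\Lambda) < \varphi_{t^\star}(e^\star,\Lambda')$, and use uniform sampling of $mp$ out of $m$ roads to get detection probability $p$. If anything, you are slightly more explicit than the paper in invoking collision resistance to force $\Lambda_w = \Lambda'$ (the paper just writes ``we use a similar analysis to ARA'' for that step), and your remark that fixing a single $(e^\star,t^\star)$ suffices for the lower bound is a clean way to sidestep any worry about correlation across rounds.
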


\begin{proof}[Proof of Observation~\ref{obv:RRA}]

We use a similar analysis to ARA. Suppose MP overreports the demand, i.e., submits a commitment $\sigma' = \textsf{MCommit}(\Lambda',r)$ where $\Lambda'$ is a strict superset of $\Lambda$. Then there exists $\lambda' \in \Lambda' \setminus \Lambda$. Let $e'$ be any road in the trip trajectory of $\lambda'$. and $t(\lambda',e')$ be the round in which trip $\lambda'$ traverses $e'$. We then have $\varphi_{t(\lambda',e')}(e',\Lambda) < \varphi_{t(\lambda',e')}(e',\Lambda')$. If $e'$ is audited in round $t(\lambda',e')$, then $\sigma'$ will be inconsistent with the roadside audit measurements, and will fail the RRA test. Since MA samples $mp$ roads to audit uniformly randomly in each round, and there are a total of $m$ roads, the probability that $e'$ is chosen in round $t(\lambda',e')$ is $p$. Since overreporting is detected only probabilistically, in the event that it is detected, MA should fine MP so that MP's expected utility is reduced if it overreports demand.  

\end{proof}

\begin{remark}[Comparing RRA to ARA]
When compared to ARA, RRA uses fewer sensors. This, however, is not without drawbacks, since RRA detects demand overreporting only probabilistically. Thus in RRA the MA needs to fine the MP in the event that demand overreporting is detected. In particular, the fine should be chosen so that the MP's expected utility is decreased if it decides to overreport demand. Concretely, suppose $U_h,U_d$ are the utilities received by MP when acting honestly and dishonestly respectively. Since dishonesty is detected with probability $p$, the fine $F$ must satisfy
\begin{align*}
    U_h > (1-p) U_d - p F \implies F > \frac{1}{p}(U_d - U_h) - U_d. 
\end{align*}
If MA is using very few sensors or if $U_d$ is much larger than $U_h$, then $F$ needs to be very large. A large fine, however can be difficult to implement. Recall from Section~\ref{sec:protocol:mech:ARA} that inconsistencies between demand metadata and roadside measurements  due to GPS errors can occur even if all parties are honest. If such errors occur, then an MP would incur a large fine even if it behaves honestly. For this reason, even an honest MP may not want to participate in the protocol. One could use an error tolerant version of RRA, but for large $F$ the tolerance parameter $\epsilon$ would need to be large, enabling a dishonest MP to overreport demand while remaining within the tolerance parameter. 
\end{remark}

\subsubsection{Security Discussion}

We now make several remarks regarding the two sensor modifications we made for RRA. First, the signatures generated by the sensors' secret keys ensure that MA cannot fabricate or otherwise tamper with the sensor's data. This is important because the sensors are in the possession of the MA and its employees. Even if the MA manages to change the data in the sensor's storage, it cannot produce the corresponding signatures for the altered data since it does not know the secret key, which is protected by a Hardware Security Module. 

Second, the sensor's location data is essential to prevent MA from conducting relay attacks. A relay attack is as follows: Suppose Alice and Bob are both MA employees. Alice has a sensor in her car. Bob does not have a sensor in his car, but he wants to collect data as if he had a sensor in his car. The MA can give Bob an unofficial sensor (this sensor does not have a valid public and secret key recognized by the MP), allowing Bob to detect signals from MP vehicles. Since Bob's sensor does not have an official secret key, he cannot obtain a valid signature for his measurements. To get the signatures, Bob sends the detected signal to Alice, and Alice relays the signal to her sensor, which will sign the measurement and record it. In this manner, the MA is able to get official measurements and signatures on Bob's road even though he does not have an official sensor. Fortunately, this attack is thwarted if the sensor knows its own location. If a sensor receives a measurement whose location is very different than its location, then it will reject the message, thus thwarting the relay attack.\footnote{The measurements can be protected by authenticated encryption so that relayers (e.g., Bob) cannot modify the messages (i.e., changing the vehicle position part of the measurement)} 

\subsection{Establishing Verifiability and Differential Privacy for Appendix~\ref{sec:ext:differentialprivacy}}\label{app:diffpriv_efficacy}

Verifiability is established by steps 1, 3 and 4 of $C$. Based on the analysis in Section~\ref{sec:protocol:stratproof}, a witness satisfies step 1 of $C$ if and only if $\Lambda_w = \Lambda$, i.e., the demand is reported honestly. Since the Pedersen commitment scheme is secure, it is computationally binding, meaning that it is computationally intractable for MP to find $Z_{\text{mp}}', r_{\text{mp}}'$ with $Z_{\text{mp}} \neq Z_{\text{mp}}'$ and $\textsf{Commit}(Z_{\text{mp}}', r_{\text{mp}}') = z_{\text{mp}}$. So in order for the MP's witness to pass step 3 of $C$, it must have $Z_{\text{mp},w} = Z_{\text{mp}}$. Given steps 1, 3 have passed, step 4 ensures that the message $z$ is indeed equal to $\widetilde{g}(\Lambda, Z_{\text{ma}} \oplus Z_{\text{mp}})$, which establishes verifiability. 

To establish differential privacy, we need to show two things: (a) MA does not know $Z_{\text{ma}} \oplus Z_{\text{mp}}$ (see Remark~\ref{rem:MA_randomness}) and (b) $Z_{\text{ma}} \oplus Z_{\text{mp}}$ is uniformly distributed over $\bigbrace{0,1}^v$, even if MA and MP are acting strategically. To this end, we consider a game between MA and MP with actions $Z_{\text{ma}},Z_{\text{mp}} \in \bigbrace{0,1}^v$ and outcome $Z_{\text{ma}} \oplus Z_{\text{mp}} \in \bigbrace{0,1}^v$. We will show that the strategy profile where both $Z_{\text{ma}}, Z_{\text{mp}}$ are independently sampled uniformly at random is a Nash equilibrium, meaning that differential privacy is achieved as long as at least one party is honest. 

To show that independent uniform random sampling of both $Z_{\text{ma}}, Z_{\text{mp}}$ is a Nash equilibrium, we first need to show that $Z_{\text{ma}},Z_{\text{mp}}$ are independent. In the protocol $Z_{\text{mp}}$ is sampled first, and a Pedersen commitment $z_{\text{mp}}$ is sent to MA. Since Pedersen commitments are perfectly hiding, the distribution of $z_{\text{mp}}$ does not depend on $Z_{\text{mp}}$. So even if MA samples $Z_{\text{ma}}$ based on the value of $z_{\text{mp}}$, the result will be independent of $Z_{\text{mp}}$. Now that we have established independence of $Z_{\text{mp}}, Z_{\text{ma}}$, we make use of the following observation. 

\begin{observation}[One Time Pad]\label{obv:otp}
Suppose $Z_{\text{ma}},Z_{\text{mp}}$ are independent random variables. If $Z_{\text{ma}}$ is uniformly distributed over $\bigbrace{0,1}^v$, then $Z_{\text{ma}} \oplus Z_{\text{mp}}$ is uniformly distributed over $\bigbrace{0,1}^v$, regardless of how $Z_{\text{mp}}$ is sampled. If $Z_{\text{mp}}$ is uniformly distributed over $\bigbrace{0,1}^v$, then $Z_{\text{ma}} \oplus Z_{\text{mp}}$ is uniformly distributed over $\bigbrace{0,1}^v$, regardless of how $Z_{\text{ma}}$ is sampled. 
\end{observation}
Observation~\ref{obv:otp} says that if $Z_{\text{ma}},Z_{\text{mp}}$ are independent, the distribution of $Z_{\text{ma}} \oplus Z_{\text{mp}}$ does not depend on $Z_{\text{mp}}$ if $Z_{\text{ma}}$ is uniformly random, and vice versa. Hence independent uniform sampling of $Z_{\text{ma}},Z_{\text{mp}}$ is a Nash equilibrium, establishing condition (b). To establish (a), if at least one party is honest, then we can assume without loss of generality that both parties are acting according to the Nash equilibrium. By observation~\ref{obv:otp}, this means the marginal distribution of $Z_{\text{ma}} \oplus Z_{\text{mp}}$ and the conditional distribution of $Z_{\text{ma}} \oplus Z_{\text{mp}}$ given $Z_{\text{ma}}$ are both uniform. In particular, MA does not learn anything about $Z_{\text{ma}} \oplus Z_{\text{mp}}$ from $Z_{\text{ma}}$.

\subsection{More Details on Congestion Pricing}\label{app:cong_price}

When the travel cost is the same as travel time, the prices can be obtained from the following optimization problem
\begin{align*}
    \min & \; \sum_{e \in E} x_e f_e(x_e) \\
    \text{s.t. } & x = \sum_{o \in V} \sum_{d \in V} x^{od} \nonumber \\
    & x^{od} \succeq 0 \; \forall o \in V, d \in V \nonumber \\
    & \sum_{(u,v) \in E} x^{od}_{(u,v)} - x^{od}_{(v,u)} = \Lambda(o,d) \bigpar{ \mathds{1}_{[u = o]} - \mathds{1}_{[u = d]} } \forall u \in V \nonumber
\end{align*}
where $x_e^{od}$ is the traffic flow from $o$ to $d$ that uses edge $e$, and $x$ is the total traffic flow. $\Lambda$ is the travel demand where $\Lambda(o,d)$ is the rate at which users require transport from $o$ to $d$. Here the objective measures the sum of the travel times of all requests in $\Lambda$.

Let $x^*$ be a solution to \eqref{opt:so_tt}. By first order optimality conditions\footnote{i.e., it should be impossible to decrease the objective function by reallocating flow from $p_1$ to $p_2$ or vice versa.} of $x^*$, for any origin-destination pair $(i,j)$, and any two paths $p_1,p_2$ from $i$ to $j$ with non-zero flow, we have
\begin{align}
    \sum_{e \in p_1} \evaluate{\frac{\partial}{\partial x_e} x_e f_e(x_e)}_{x_e = x_e^*} &= \sum_{e' \in p_2} \evaluate{\frac{\partial}{\partial x_{e'}} x_{e'} f_{e'}(x_{e'})}_{x_{e'} = x_{e'}^*} \nonumber \\
    \implies \sum_{e \in p_1} f_e(x_e^*) + x_e^* f_e'(x_e^*) &= \sum_{e' \in p_2} f_{e'}(x_{e'}^*) + x_{e'}^* f_{e'}'(x_{e'}^*). \label{eqn:tt:KKT}
\end{align}
In order to realize $x^*$ as a user equilibrium, the costs of $p_1,p_2$ should be the same so that no user has an incentive to change their strategy. This can be achieved by setting the toll for each road $e$ as $p_e := x_e^* f_e'(x_e^*)$. By doing so, from \eqref{eqn:tt:KKT} we can see that the cost (travel time plus toll) for the two paths will be equal. 

In the context of PMM, the function $g$ associated with congestion pricing is
\begin{align*}
    g_{\text{cp}}(\Lambda) := \bigbrace{ x_e^* f_e'(x_e^*) }_{e \in E} \text{ where } x^* \text{ solves } \eqref{opt:so_tt}.
\end{align*}

\subsection{Efficacy of Merkle Proofs}\label{app:merkle_proof}

To verify the proof $\pi(m_i) = \bigbrace{r_i} \cup \bigbrace{\text{sibling}(h_{j,x_j})}_{j=1}^{\ell_q}$ for membership of $m_i$, the recipient of the proof would compute $v_1,v_2,...,v_{\ell_q-1}$ recursively via:
\begin{align*}
    v_1 &:= H(r_i||m_i) \\
    v_{j} &:= \casewise{
        \begin{tabular}{cc}
            $H(v_{j-1}||\text{sibling}(h_{j-1,x_{j-1}}))$ & if $x_{j-1}$ is even, \\
            $H(\text{sibling}(h_{j-1,x_{j-1}})||v_{j-1})$ & if $x_{j-1}$ is odd,
        \end{tabular}
    } \text{ for } 1 \leq j < \ell_q.
\end{align*}
By the construction of the Merkle tree, $v_j = h_{j,x_j}$, and so in particular the Merkle Proof is valid if and only if $v_{\ell_q}$ is equal to the root, i.e., $v_{\ell_q} = h_{\ell_q,0}$.

Since there are $q$ leaves in the binary tree, $p_i$ has at most $\log_2 q$ vertices in it, and each hash is $d$ bits, so the length of $\pi$ is at most $d \log_2 q$. 

By collision resistance of $H$, it is intractable to forge a proof if $m_i$ is not in the tree, and since hiding nonces are used when hashing the items, the proof reveals nothing about the other items in the tree. 

\fi 

\end{document}